  \theoremstyle{remark}
\newtheorem{rem}{\protect\remarkname}
\theoremstyle{plain}
\newtheorem{thm}{\protect\theoremname}
\theoremstyle{plain}
\newtheorem{prop}{\protect\propositionname}
\theoremstyle{plain}
\newtheorem{lem}{\protect\lemmaname}
\theoremstyle{theorem}
\newtheorem{assumption}{Assumption}
\numberwithin{equation}{section}
\numberwithin{figure}{section}
\theoremstyle{remark}
\providecommand{\lemmaname}{Lemma}
\providecommand{\propositionname}{Proposition}
\providecommand{\remarkname}{Remark}
\providecommand{\theoremname}{Theorem}
\providecommand{\lemmaname}{Lemma}
  \providecommand{\propositionname}{Proposition}
  \providecommand{\remarkname}{Remark}
\providecommand{\theoremname}{Theorem}
\begin{document}

\date{\today.}
\title{Control Variables, Discrete Instruments, and Identification of Structural
Functions}

\author{Whitney Newey\thanks{Department of Economics, MIT, wnewey@mit.edu.}
$\quad$and $\quad$Sami Stouli\thanks{Department of Economics, University of Bristol, s.stouli@bristol.ac.uk}}
\maketitle
\begin{abstract}
Control variables provide an important means of controlling for endogeneity
in econometric models with nonseparable and/or multidimensional heterogeneity.
We allow for discrete instruments, giving identification results under
a variety of restrictions on the way the endogenous variable and the
control variables affect the outcome. We consider many structural
objects of interest, such as average or quantile treatment effects.
We illustrate our results with an empirical application to Engel curve
estimation. 
\end{abstract}
\noindent \textsc{Keywords}: 
Control variables, discrete instruments, structural functions, endogeneity,
partially nonparametric, nonseparable models, identification, treatment
effects.

\noindent \textsc{JEL classification}: C14, C31, C35

\newpage{}

\section{Introduction}

Nonseparable and/or multidimensional heterogeneity is important. It
is present in discrete choice models as in McFadden (1973) and Hausman
and Wise (1978). Multidimensional heterogeneity in demand functions
allows price and income elasticities to vary over individuals in unrestricted
ways, e.g., Hausman and Newey (2016) and Kitamura and Stoye (2018).
It allows general variation in production technologies. Treatment
effects that vary across individuals require intercept and slope heterogeneity.

Endogeneity is often a problem in these models because we are interested
in the effect of an observed choice, or treatment variable on an outcome.
Control variables provide an important means of controlling for endogeneity
with multidimensional heterogeneity. A control variable is an observed
or estimable variable that makes heterogeneity and treatment independent
when it is conditioned on. Observed covariates serve as control variables
for treatment effects (Rosenbaum and Rubin, 1983). The conditional
cumulative distribution function (CDF) of a choice variable given
an instrument can serve as a control variable in economic models (Imbens
and Newey, 2009).

Nonparametric identification of many objects of interest, such as
average or quantile treatment effects, requires a full support condition,
that the support of the control variable conditional on the treatment
variable is equal to the marginal support of the control variable.
This restriction is often not satisfied in practice; e.g., see Imbens
and Newey (2009) for Engel curves. It cannot be satisfied when instruments
are discrete. One approach to this problem is to focus on identified
sets for objects of interest, as for quantile effect in Imbens and
Newey (2009). Another approach is to consider restrictions on the
model that allow for point identification. Florens et al. (2008) did
so by showing identification when the structural function is a polynomial
in the endogenous variable and a measurable separability condition
is satisfied. Torgovitsky (2015) and D'Haultfœuille and Février (2015)
did so by showing identification for discrete instruments when the
structural disturbance is a scalar.

In this paper we give identification results under a variety of restrictions
on the way the treatment and control variables enter the control regression
of the outcome of interest on the endogenous and control variables.
The control regression functions (CRF) we consider are the conditional
mean, quantile, and (monotone transformations of) distribution functions
of the outcome given the endogenous and control variables. We give
identification results when a CRF is a linear combinations of known
functions of a treatment and control variables. We also give identification
results for partially nonparametric specifications where a CRF is
a linear combination of known functions of either the treatment or
the control variables, with coefficients that are unknown functions
of the other variable.

The partially nonparametric specifications we consider generalise
those of Florens et al. (2008) to allow for nonpolynomial functions
of endogenous variables or control variables and to consider CRFs
other than the mean. We also take a different approach to identification,
focusing here on conditional nonsingularity of second moment matrices
instead of measurable separability. These results here also generalise
the identification conditions for the baseline models considered by
Chernozhukov et al. (2017). For triangular systems with a continuous
treatment, our identification results also generalise those of Masten
and Torgovitsky (2016) to allow for known functions of control variables,
and to include quantile and distribution treatment effects. For treatment
effects with a binary or discrete treatment, the present paper contributes
to the literature (Rosenbaum and Rubin, 1983; Imbens, 2000; Wooldridge,
2004) by providing conditions based on conditional nonsingularity
for identification of average treatment effects. These results complement
those of Newey and Stouli (2018) by allowing for known functions of
control variables and by considering conditional quantile and distribution
CRFs.

A main benefit of our approach is that it allows for discrete instruments.
For triangular systems, with continuous treatment, we show identification
of average, distribution, and quantile treatment effects given sufficient
variation in the discrete instrument conditional on the endogenous
variable. These results are obtained by viewing various control regression
specifications as varying coefficient models. These results generalise
the analysis of Masten and Torgovitsky (2016) to conditional distribution
and quantile effects, and to known functions of control variables.

These results provide an alternative approach to identifying objects
of interest in nonseparable models with discrete instruments. Instead
of restricting the dimension of the heterogeneity to obtain identification
with discrete instruments as done in Torgovitsky (2015) and D'Haultfœuille
and Février (2015), we can allow for multidimensional heterogeneity
but restrict the way the treatment or controls affect the outcome.

These results provide an alternative approach to identifying treatment
effects with a finite number of treatment regimes. Here the CRF depends
on treatment only through the (known) vector of dummy variables for
each regime. Nonsingularity of the conditional second moment matrix
provides a relatively simple and general condition for identification
of treatment effects. If restrictions are placed on the way the control
variables affect the CRF then the conditional nonsingularity condition
can be weakened. For example for a binary treatment regime (i.e., treated
or not) we can allow for the propensity score to be bounded away from
zero and one only on a subset of control variables values.

We illustrate our results using an empirical application to Engel
curves estimation using British expenditure survey data. We find that
estimates of average, distributional and quantile treatment effects
of total expenditure on food and leisure expenditure are not very
sensitive to discretisation of the income instruments. We find that
as we ``coarsen'' the instrument by only using knowledge of income
intervals the structural estimates do not change much until the instrument
is very coarse. Thus, in this empirical example we find that one can
obtain good structural estimates even with discrete instruments.

In Section 2 we introduce the parametric models we consider. In Section
3 we give identification results. In Section 4 we extend these results
to partially nonparametric models that allow for nonparametric components.
Section 5 reports the results of an empirical application to Engel
curve estimation.




\section{Parametric Modelling of Control Regressions}

Let $Y$ denote an outcome variable of interest and $X$ an endogenous
treatment with supports denoted by $\mathcal{Y}$ and $\mathcal{X}$,
respectively. For $\varepsilon$ a structural disturbance vector of
unknown dimension, a nonseparable control variable model takes the
form 
\begin{equation}
Y=g(X,\varepsilon),\label{eq:g(x,e)}
\end{equation}
where $X$ and $\varepsilon$ are independent conditional on an observable
or estimable control variable denoted $V$. Conditioning on the control
variable allows to identify general features of the structural relationship
between $X$ and $Y$ in model (\ref{eq:g(x,e)}), such as those captured
by the structural functions of Blundell and Powell (2003, 2004), and
Imbens and Newey (2009). An important kind of model where $X$ is
independent of $\varepsilon$ conditional on $V$ is a structural
triangular system where $X=h(Z,\eta)$ and $h(z,\eta)$ is one-to-one
in $\eta$. 
If $(\varepsilon,\eta)$ are jointly independent of $Z$ then 
$V=F_{X\mid Z}(X\mid Z)$, the conditional CDF of $X$ given $Z$,
is a control variable in this model (Imbens and Newey, 2009).

Leading examples of structural functions are the average structural
function, $\mu(x)$, the distribution structural function, $G(y,x)$,
and the quantile structural function (QSF) $Q(p,x)$, given by 
\begin{eqnarray*}
\mu(x) & := & \int g(x,\varepsilon)F_{\varepsilon}(d\varepsilon),\quad G(y,x):=\text{Pr}(g(x,\varepsilon)\leq y),\\
Q(p,x) & := & p^{\text{th}}\text{ quantile of }g(x,\varepsilon),
\end{eqnarray*}
where $x$ is fixed in these expressions. These structural functions
may be identifiable from control regressions of $Y$ on $X$ and $V$,
including the conditional mean $E[Y\mid X,V],$ CDF, $F_{Y\mid XV}(Y\mid X,V)$,
and 
quantile function, $Q_{Y\mid XV}(U\mid X,V)$, of $Y$ given $(X,V)$.
In particular, when the support $\mathcal{V}_{x}$ of $V$ conditional
on $X=x$ equals the marginal support $\mathcal{V}$ of $V$ we have
\begin{align}
\mu(x) & =\int_{\mathcal{V}}E[Y\mid X=x,V=v]F_{V}(dv),\quad G(y,x)=\int_{\mathcal{V}}F_{Y\mid XV}(y\mid x,v)F_{V}(dv),\nonumber \\
Q(p,x) & =G^{\leftarrow}(p,x):=\inf\{y\in\mathbb{R}:G(y,x)\geq p\};\label{eq:structure}
\end{align}
see Blundell and Powell (2003) and Imbens and Newey (2009).

The key condition for equation (\ref{eq:structure}) is full support,
that the support $\mathcal{V}_{x}$ of $V$ conditional on $X=x$
equals the marginal support of $V$. Without full support the integrals
would not be well defined because integration would be over a range
of $(x,v)$ values that are outside the joint support of $(X,V).$
Having a full support for each $x$ is equivalent to $(X,V)$ having
rectangular support. In the absence of a rectangular support, global
identification of the structural functions at all $x$ must rely on
alternative conditions that identify $F_{Y\mid XV}(y\mid x,v)$ for
all $(x,v)\in\mathcal{X}\times\mathcal{V}$ and not merely over the
joint support $\mathcal{X}\mathcal{V}$ of $(X,V)$. An example of
such conditions are functional form restrictions on the controlled
regressions $F_{Y\mid XV}$ and $Q_{Y\mid XV}$ which thus constitute
natural modelling targets in the context of nonseparable conditional
independence models. Imbens and Newey (2009) did show that structural
effects may be partially identified without the full support condition.
Here we focus on achieving identification via restricting the form
of control regressions.

We begin with parametric specifications that are linear combinations
of a vector of known functions $w(X,V)$ having the kronecker product
form $p(X)\otimes q(V),$ where $p(X)$ and $q(V)$ are vectors of
transformations of $X$ and $V$, respectively. Let $\Gamma$ denote
a strictly increasing continuous CDF, such as the Gaussian CDF $\Phi$,
with inverse function denoted $\Gamma^{-1}.$ The control regression
specifications we consider are 
\begin{equation}
E[Y|X,V]=\beta_{0}^{\prime}[p(X)\otimes q(V)],\;\;F_{Y|XV}(y|X,V)=\Gamma(\beta(y)^{\prime}[p(X)\otimes q(V)]),\label{eq:param model1}
\end{equation}
and, when $Y$ is continuous, 
\begin{equation}
Q_{Y|X,V}(u|X,V)=\beta(u)^{\prime}[p(X)\otimes q(V)],\quad u\in(0,1),\label{eq:param model2}
\end{equation}
where the coefficients $\beta(y)$ and $\beta(u)$ are functions of
$y$ and $u$, respectively. The quantile and conditional mean coefficients
are related by $\beta_{0}=\int_{0}^{1}\beta(u)du.$ When $Y$ is discrete,
the conditional distribution specification can be thought of as a
discrete choice model as in McFadden (1973). Examples of structural
models that give rise to CRFs of the form (\ref{eq:param model1})-(\ref{eq:param model2})
are given below and in Chernozhukov et al. (2017).

It is convenient in what follows to use a common notation for the
conditional mean, distribution, and quantile control regressions.
For $\mathcal{U}=(0,1)$ and an index set $\mathcal{T}=\{0\}$, $\mathcal{Y}$,
or $\mathcal{U}$, we define the collection of functions indexed by
$\tau\in\mathcal{T}$, 
\[
\varphi_{\tau}(x,v)=\begin{cases}
E[Y\mid X=x,V=v] & \textrm{if \ensuremath{\mathcal{T}}=\{0\}}\\
\Gamma^{-1}\left(F_{Y\mid XV}(\tau\mid x,v)\right) & \textrm{if \ensuremath{\mathcal{T}}=\ensuremath{\mathcal{Y}}}\\
Q_{Y\mid XV}(\tau\mid x,v) & \textrm{if \ensuremath{\mathcal{T}}=\ensuremath{\mathcal{U}}}
\end{cases}.
\]
While the coefficients $y\mapsto\beta(y)$ and $u\mapsto\beta(u)$
in (\ref{eq:param model2}) are infinite-dimensional parameters, for
each $\tau$ in $\mathcal{T}$ the three control regression specifications
share the essentially parametric form 
\[
\varphi_{\tau}(X,V)=\beta_{\tau}^{\prime}w(X,V),\quad w(X,V):=p(X)\otimes q(V),
\]
where the coefficient $\beta_{\tau}$ is a finite-dimensional parameter
vector. This interpretation motivates the following definition of
a \textit{parametric} class of conditional independence models.

\begin{assumption} (a) For the model in (\ref{eq:g(x,e)}), there
exists a control variable $V$ such that $X$ and $\varepsilon$ are
independent conditional on $V$. (b) For a specified set $\mathcal{T}=\{0\}$,
$\mathcal{Y}$, or $\mathcal{U}$, and each $\tau\in\mathcal{T}$,
the outcome $Y$ conditional on $(X,V)$ follows the model 
\begin{equation}
\varphi_{\tau}(X,V)=\beta_{\tau}^{\prime}w(X,V),\quad w(X,V):=p(X)\otimes q(V).\label{eq:PP Second Stage}
\end{equation}
\label{ass:PPTModel} \end{assumption}

Standard results such as those of Newey and McFadden (1994) imply
that point identification of $\beta_{\tau}$ only requires positive
definiteness of the second moment matrix $E[w(X,V)w(X,V)']$. Under
this condition knowledge of the control regressions is achievable
at all $(y,x,v)\in\mathcal{Y}\times\mathcal{X}\times\mathcal{V}$,
and the structural functions are then point identified as functionals
of $\varphi_{\tau}(X,V)$ without full support. The formulation of
primitive conditions under which $E[w(X,V)w(X,V)']$ is positive definite
thus provides a characterisation of the identifying power of parametric
conditional independence models without the full support condition.
Chernozhukov et al. (2017) gave simple sufficient conditions when
the joint distribution of $X$ and $V$ has a continuous component.
Here we generalize these results in a way that allows for the distribution
of $V$ given $X$ (or $X$ given $V)$ to be discrete.

We next give primitive conditions for identification in parametric
conditional independence models. For triangular systems, we show that
these conditions can be satisfied with discrete valued instrumental
variables. Estimation and inference methods for the CRFs in (\ref{eq:PP Second Stage})
and the corresponding structural functions in triangular systems are
extensively analysed by Chernozhukov et al. (2017), and directly apply
when $V$ is observable. 
\begin{rem}
An additional vector of exogenous covariates $Z_{1}$ can be incorporated
straightforwardly in our models. Let $r(Z_{1})$ be a vector of known
transformations of $Z_{1}$, and define $w(X,Z_{1},V):=p(X)\otimes r(Z_{1})\otimes q(V)$
the augmented vector of regressors. The control regressions then take
the form 
\[
\varphi_{\tau}(X,Z_{1},V)=\beta_{\tau}^{\prime}w(X,Z_{1},V),\quad\tau\in\mathcal{T}.
\]
Our identification analysis is not affected by the presence of additional
covariates and for clarity of exposition we do not include them in
the remaining of the paper. Chernozhukov et al. (2017) provide a detailed
exposition of the models we consider in the presence of exogenous
covariates.\qed 
\end{rem}

\section{Identification}

In this section we formulate conditions for positive definiteness
of $E[w(X,V)w(X,V)']$. We first consider the important particular
case where one of the elements $q(V)$ or $p(X)$ of the vector of
regressors $w(X,V)$ is restricted to its first two components. With
either $q(V)=(1,V)'$ or $p(X)=(1,X)'$, each type of restriction
defines a class of baseline parametric models. For triangular systems
we show that a binary instrumental variable is sufficient for identification
of the corresponding control regression and structural functions.
These baseline specifications are thus of substantial interest for
empirical practice, and can be generalised by expanding the restricted
element in $w(X,V)$.

\subsection{Baseline Models\label{subsec:BaselineID}}

In the first class of baseline models, we set $q(V)=(1,V)'$, and
the corresponding vector of regressors in the CRF $\varphi_{\tau}(X,V)$
is $w(X,V)=(p(X)',Vp(X)')'$. We denote the cardinality of sets such
as $\mathcal{X}$ and $\mathcal{V}_{x}$ by $\left|\mathcal{X}\right|$
and $\left|\mathcal{V}_{x}\right|$, respectively. The condition for
identification can then be formulated in terms of the support of $V$
conditional on $X$: letting 
\[
\mathcal{X}_{V}^{o}=\left\{ x\in\mathcal{X}\,:\,|\mathcal{V}_{x}|\geq2\right\} ,
\]
a sufficient condition is that $E[1(X\in\widetilde{\mathcal{X}})p(X)p(X)']$
be positive definite with $\widetilde{\mathcal{X}}\subseteq\mathcal{X}_{V}^{o}$.
Under this condition $\mathcal{X}_{V}^{o}$ is a set with positive
probability, and $V$ has positive variance conditional on $X=x$
for each $x$ in that set.

Alternatively, with $p(X)=(1,X)'$, the vector of regressors in the
CRF $\varphi_{\tau}(X,V)$ that defines the second class of baseline
models is $w(X,V)=(q(V)',Xq(V)')'$. The condition for identification
can then be formulated in terms of the support of $X$ conditional
on $V$: letting 
\[
\mathcal{V}_{X}^{o}=\left\{ v\in\mathcal{V}\,:\,|\mathcal{X}_{v}|\geq2\right\} ,
\]
a sufficient condition is that $E[1(V\in\widetilde{\mathcal{V}})q(V)q(V)']$
be positive definite with $\widetilde{\mathcal{V}}\subseteq\mathcal{V}_{X}^{o}$.
Under this condition $\mathcal{V}_{X}^{o}$ is a set with positive
probability and $X$ has positive variance conditional on $V=v$ for
each $v$ in that set.

Let $C<\infty$ denote some generic positive constant whose value
may vary from place to place.

\begin{assumption} (a) We have that $E[p(X)p(X)']$ exists, $\sup_{x\in\mathcal{X}}E[||q(V)||^{2}\mid X=x]\leq C$
and, for some specified set $\widetilde{\mathcal{X}}$, $E[1(X\in\widetilde{\mathcal{X}})p(X)p(X)']$
is positive definite. (b) We have that $E[q(V)q(V)']$ exists, $\sup_{v\in\mathcal{V}}E[||p(X)||^{2}\mid V=v]\leq C$,
and, for some specified set $\widetilde{\mathcal{V}}$, $E[1(V\in\widetilde{\mathcal{V}})q(V)q(V)']$
is positive definite.\label{ass:P(X)posddef} \end{assumption}

\begin{sloppy}The following theorem states our first main result.
The proofs of all our formal results are given in Appendix \ref{sec:Proofs}. 
\begin{thm}
\label{thm:Theorem1}(i) Let $q(V)=(1,V)'$. If Assumption \ref{ass:P(X)posddef}(a)
holds with $\widetilde{\mathcal{X}}\subseteq\mathcal{X}_{V}^{o}$,
then $E[w(X,V)w(X,V)']$ exists and is positive definite. (ii) Let
$p(X)=(1,X)'$. If Assumption \ref{ass:P(X)posddef}(b) holds with
$\widetilde{\mathcal{V}}\subseteq\mathcal{V}_{X}^{o}$, then $E[w(V,X)w(V,X)']$
exists and is positive definite. 
\end{thm}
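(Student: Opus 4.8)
The plan is to prove part~(i) in full and obtain part~(ii) by exchanging the roles of $(X,p)$ and $(V,q)$. Throughout write $w(X,V)=\big(p(X)',\,V\,p(X)'\big)'$ and take an arbitrary conformable vector $a=(a_{1}',a_{2}')'$, with $a_{1},a_{2}$ of dimension $\dim p$.

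First I would dispose of existence. With $q(V)=(1,V)'$ the bound $\sup_{x\in\mathcal{X}}E[\|q(V)\|^{2}\mid X=x]\leq C$ in Assumption~\ref{ass:P(X)posddef}(a) is exactly $\sup_{x}E[1+V^{2}\mid X=x]\leq C$. Hence, by iterated expectations, $E[\|w(X,V)\|^{2}]=E\big[\|p(X)\|^{2}\,E[\|q(V)\|^{2}\mid X]\big]\leq C\,E[\|p(X)\|^{2}]$, which is finite because $E[p(X)p(X)']$ exists (its trace equals $E[\|p(X)\|^{2}]$). Therefore every entry of $E[w(X,V)w(X,V)']$ is finite and the matrix is well defined.

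For positive definiteness it suffices to show that $a'E[w(X,V)w(X,V)']a=0$ forces $a=0$. Since $a'w(X,V)=a_{1}'p(X)+V\,a_{2}'p(X)=:A+VB$ with $A:=a_{1}'p(X)$ and $B:=a_{2}'p(X)$ both $\sigma(X)$-measurable, we have $a'E[ww']a=E[(A+VB)^{2}]$, which vanishes if and only if $A+VB=0$ almost surely, and a fortiori $\mathrm{Var}(A+VB\mid X)=0$ a.s. Because $A,B$ are functions of $X$, $\mathrm{Var}(A+VB\mid X)=B^{2}\,\mathrm{Var}(V\mid X)$, so $B^{2}\,\mathrm{Var}(V\mid X)=0$ a.s. On the event $\{X\in\widetilde{\mathcal{X}}\}$ we have $X\in\mathcal{X}_{V}^{o}$ by hypothesis, so $V$ is nondegenerate conditional on $X$ (its conditional support has at least two points) and hence $\mathrm{Var}(V\mid X)>0$ there (finite by the moment bound above); this forces $B=a_{2}'p(X)=0$ a.s. on $\{X\in\widetilde{\mathcal{X}}\}$. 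Taking expectations of $1(X\in\widetilde{\mathcal{X}})B^{2}=0$ gives $a_{2}'E[1(X\in\widetilde{\mathcal{X}})p(X)p(X)']a_{2}=0$, and positive definiteness of $E[1(X\in\widetilde{\mathcal{X}})p(X)p(X)']$ yields $a_{2}=0$. With $a_{2}=0$ the a.s. identity reduces to $A=a_{1}'p(X)=0$; the same argument applied to $1(X\in\widetilde{\mathcal{X}})A^{2}=0$ gives $a_{1}=0$. Hence $a=0$, proving~(i). Part~(ii) is the mirror image: with $p(X)=(1,X)'$ write $w=\big(q(V)',\,X\,q(V)'\big)'$, bound $E[\|w\|^{2}]\leq C\,E[\|q(V)\|^{2}]<\infty$ using $\sup_{v}E[\|p(X)\|^{2}\mid V=v]\leq C$, and run the conditional-variance argument conditioning on $V$, using $\mathrm{Var}(X\mid V)>0$ on $\mathcal{V}_{X}^{o}$ together with positive definiteness of $E[1(V\in\widetilde{\mathcal{V}})q(V)q(V)']$.

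The only genuinely substantive step is translating the combinatorial support condition $|\mathcal{V}_{x}|\geq 2$ into the strict positivity $\mathrm{Var}(V\mid X=x)>0$ and feeding it through the conditional variance decomposition; the passage between the almost-sure identity $A+VB=0$, its conditional variance, and the associated quadratic forms is routine bookkeeping, as is the symmetric treatment of~(ii).
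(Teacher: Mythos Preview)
Your argument is correct. Both your proof and the paper's hinge on the same substantive fact---that $|\mathcal{V}_{x}|\geq 2$ gives $\mathrm{Var}(V\mid X=x)>0$---but the routes differ. The paper writes $E[w(X,V)w(X,V)']$ as a $2\times 2$ block matrix in $p(X)p(X)'$, invokes the Schur complement criterion, and then represents the Schur complement $\Upsilon$ as a second-moment matrix $E[\{p(X)V-\Xi p(X)\}\{p(X)V-\Xi p(X)\}']$ via the matrix Cauchy--Schwarz inequality of Tripathi (1999), finally arguing that no nontrivial linear combination of $p(X)V$ can equal a deterministic function of $X$ on a set where $V$ has conditional variation. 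Your approach bypasses all of this machinery: you test positive definiteness directly on an arbitrary quadratic form $a'E[ww']a=E[(A+VB)^{2}]$, kill $a_{2}$ through the conditional variance identity $\mathrm{Var}(A+VB\mid X)=B^{2}\mathrm{Var}(V\mid X)$, and then kill $a_{1}$. This is more elementary and arguably more transparent; the Schur complement route has the advantage of isolating exactly which sub-block carries the identification content, and connects to the matrix Cauchy--Schwarz literature, but for this $2\times 2$ case your direct argument is shorter and requires no external lemmas.
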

The formulation of sufficient conditions for identification in terms
of $\mathcal{X}_{V}^{o}$ and $\mathcal{V}_{X}^{o}$ \foreignlanguage{british}{emphasises}
the fact that the full support condition $\mathcal{V}_{x}=\mathcal{V}$
is not required for $E[w(V,X)w(V,X)']$ to be positive definite in
the baseline specifications, and hence for identification of the control
regressions and structural functions. We also note that identification
does not depend on the dimension of the unrestricted element $p(X)$
or $q(V)$ entering the vector of regressors $w(X,V)$. Thus the baseline
specifications allow for flexible modelling of either how $X$ affects
the CRFs or how $V$ affects the CRFs. When $q(V)=(1,V)'$, complex
features of the relationship between $X$ and $Y$ can also be incorporated
into the specification of the structural functions.

\end{sloppy}

In triangular systems with control variable $V=F_{X\mid Z}(X\mid Z)$,
the conditions given above for $E[w(X,V)w(X,V)']$ to be positive
definite translate into primitive conditions in terms of $\mathcal{Z}_{x}$,
the support of $Z$ conditional on $X=x$. 
Letting 
\[
\mathcal{X}_{Z}^{o}=\left\{ x\in\mathcal{X}\,:\,|\mathcal{Z}_{x}|\geq2\right\} ,
\]
the matrix $E[w(X,V)w(X,V)']$ will be positive definite if Assumption
\ref{ass:P(X)posddef}(a) holds for a set $\widetilde{\mathcal{X}}\subseteq\mathcal{X}_{Z}^{o}$
such that $F_{X | Z}(x | z)\neq F_{X | Z}(x | \tilde{z})$
for some $z,\tilde{z}\in\mathcal{Z}_{x}$ and all $x\in\mathcal{X}_{Z}^{o}$.
For $v\mapsto Q_{X | Z}(v | Z)$ denoting the quantile function
of $X$ conditional on $Z$, the result also holds if Assumption \ref{ass:P(X)posddef}(b)
is satisfied for a set $\widetilde{\mathcal{V}}\subseteq(0,1)$ with
positive probability such that $Q_{X | Z}(v | z)\neq Q_{X | Z}(v | \tilde{z})$
for some $z,\tilde{z}\in\mathcal{Z}$ and all $v\in\widetilde{\mathcal{V}}$.
Under these conditions a discrete instrument, including binary, is
then sufficient for our baseline models to identify the structural
functions. This demonstrates the relevance of the 
baseline specifications in a wide range of empirical settings, for
instance triangular systems with a binary or discrete instrument and
including a discrete or mixed continuous-discrete outcome.\footnote{For example, our baseline models can be used for the specification
of parametric sample selection models with censored selection rule
as considered in Fernandez-Val et al. (2018).}

\subsubsection{Examples\label{subsubsec:Examples}}

An example of a structural model that gives rise to CRFs as in (\ref{eq:PP Second Stage})
is the multidimensional heterogeneous coefficients model\textbf{ 
\begin{equation}
Y=g(X,\varepsilon)=\sum_{j=1}^{J}p_{j}(X)\varepsilon_{j},\;\;E[\varepsilon_{j}|X,V]=E[\varepsilon_{j}|V]=\beta_{0j}'q(V),\;\;j\in\{1,\ldots,J\}.\label{eq:meanindepmodel}
\end{equation}
}The corresponding control mean regression function is
\[
E[Y|X,V]=\sum_{j=1}^{J}p_{j}(X)E[\varepsilon_{j}|X,V]=\sum_{j=1}^{J}p_{j}(X)\{\beta_{0j}'q(V)\}=\beta_{0}'[p(X)\otimes q(V)],
\]
with $\beta_{0}=(\beta_{01}',\ldots,\beta_{0J}')'$, $j\in\{1,\ldots,J\}$,
which has the form of (\ref{eq:PP Second Stage}) with $\mathcal{T}=\{0\}$
and $\tau=0$ in Assumption \ref{ass:PPTModel}. With $q(V)=(1,\widetilde{q}(V)')'$,
where $\widetilde{q}(V)$ is a vector of known functions of $V$ that
satisfy $E[\widetilde{q}(V)]=0$,\footnote{For the baseline specification $q(V)=(1,V)'$, in a triangular model
with $X=h(Z,V)$, $v\mapsto h(Z,v)$ strictly increasing, and $V$
independent from $Z$, the normalisation $V\sim N(0,1)$ implies that
$V=\Phi^{-1}(F_{X|Z}(X|Z))$ is an example of a control variable with
$E[V]=0$. Our identification analysis applies for any strictly monotonic
transformation of the control function $F_{X|Z}(X|Z)$.} the corresponding average structural function takes the form 
\[
\mu(X)=\int_{\mathcal{V}}E[Y\mid X,V=v]F_{V}(dv)=\sum_{j=1}^{J}p_{j}(X)\{\beta_{0j}'E[q(V)]\}=\sum_{j=1}^{J}\beta_{0j1}p_{j}(X),
\]
where $\beta_{0j1}$ denotes the first component of $\beta_{0j}$,
$j\in\{1,\ldots,J\}$. 

When $Y$ is continuous, if the unobserved heterogeneity components
$\varepsilon_{j}$ satisfy the conditional independence property 
\[
\varepsilon_{j}=Q_{\varepsilon_{j}\mid XV}(U\mid X,V)=q(V)'\beta_{j}(U),\quad U\mid X,V\sim U(0,1),\quad j\in\{1,\ldots,J\},
\]
where the unobservable $U$ is the same for each $\varepsilon_{j}$,
then for each $u\in\mathcal{U}$ the control conditional quantile
function is
\[
Q_{Y\mid XV}(u\mid X,V)=\sum_{j=1}^{J}p_{j}(X)[q(V)'\beta_{j}(u)]=\beta_{u}^{\prime}[p(X)\otimes q(V)],
\]
where $\beta_{u}=(\beta_{1}(u)',\ldots,\beta_{J}(u)')'$, 
which has the form of (\ref{eq:PP Second Stage}) with $\mathcal{T}=\mathcal{U}$
and $\tau=u$ in Assumption \ref{ass:PPTModel}.

Model (\ref{eq:meanindepmodel}) thus allows for flexible modelling
of the relationship between the treatment $X$ and the outcome $Y$
in both the control regression and average structural functions, which
are identified under the conditions of Theorem \ref{thm:Theorem1}.
Similarly, complex features of the relationship between the source
of endogeneity $V$ and the outcome $Y$ can be captured by the model
specification.

An important particular case of model (\ref{eq:meanindepmodel}) with
$p(X)=(1,X)^{\prime}$ is a parametric treatment effects model, where
$p(X)$ is a vector that includes a constant and dummy variables for
various kinds of treatments. A restricted form of the Rosenbaum and
Rubin (1983) treatment effects model is included as a special case,
where $X\in\{0,1\}$ is a treatment dummy variable that is equal to
one if treatment occurs and equals zero without treatment. The control
mean regression for model (\ref{eq:meanindepmodel}) is then 
\[
E[Y|X,V]=E[\varepsilon_{1}|X,V]+E[\varepsilon_{2}|X,V]X=\beta_{01}'q(V)+\{\beta_{02}'q(V)\}X=\beta_{0}'[p(X)\otimes q(V)],
\]
with $\beta_{0}=(\beta_{01}',\beta_{02}')'$. For a set $\widetilde{\mathcal{V}}$
such that $E[1(V\in\widetilde{\mathcal{V}})q(V)q(V)']$ is nonsingular,
a sufficient condition for identification is that the conditional
second moment matrix of $(1,X)^{\prime}$ given $V$ is nonsingular
on $\widetilde{\mathcal{V}}$, which is the same as 
\begin{equation}
\text{Var}(X\mid V)=P(V)[1-P(V)]>0,\quad P(V):=\Pr(X=1\mid V),\label{eq:IdBinary}
\end{equation}
on $\widetilde{\mathcal{V}}$. Here we can see that this identification
condition is the same as $0<P(V)<1$ with positive probability, which
is weaker than the standard identification condition in the unrestricted
model.

In the binary treatment model, $\varepsilon=(\varepsilon_{1},\varepsilon_{2})$
is two dimensional with $\varepsilon_{1}$ giving the outcome without
treatment and $\varepsilon_{2}$ being the treatment effect. Here
the control variables in $V$ would be observable variables such that
the coefficients $(\varepsilon_{1},\varepsilon_{2})$ are mean independent
of treatment conditional on $V$.

\subsection{Generalisation\label{sec:Generalisation}}

We generalise the results above by expanding the set of regressors
in the baseline specifications. In the more general case we consider
here, both $p(X)$ and $q(V)$ are vectors of transformations of $X$
and $V$, respectively. In practice these will typically consist of
basis functions with good approximating properties such as splines,
trigonometric or orthogonal polynomials. 

One general condition for positive definiteness of $E[w(X,V)w(X,V)']$
is the existence of a set of values $x$ of $X$ with positive probability
such that the smallest eigenvalue of $E[q(V)q(V)'\mid X=x]$ is bounded
away from zero. An alternative general condition is the existence
of a set of values $v$ of $V$ with positive probability such that
the smallest eigenvalue of $E[p(X)p(X)'\mid V=v]$ is bounded away
from zero. This characterisation leads to natural sufficient conditions
for $E[w(X,V)w(X,V)']$ to be positive definite when the vectors $p(X)$
and $q(V)$ are unrestricted.

With $B>0$ denoting some generic constant whose value may vary from
place to place, let $\lambda_{\min}(x)$ denote the smallest eigenvalue
of $E[q(V)q(V)'\mid X=x]$, and define 
\[
\mathcal{X}_{V}^{*}=\left\{ x\in\mathcal{X}\,:\,\lambda_{\min}(x)\geq B>0\right\} .
\]
The smallest eigenvalue of $E[q(V)q(V)'\mid X=x]$ is then bounded
away from zero uniformly over $x\in\mathcal{X}_{V}^{*}$, and a sufficient
condition for identification is that Assumption \ref{ass:P(X)posddef}(a)
holds with $\widetilde{\mathcal{X}}\subseteq\mathcal{X}_{V}^{*}$.
Alternatively, let $\lambda_{\min}(v)$ denote the smallest eigenvalue
of $E[p(X)p(X)'\mid V=v]$, and define 
\[
\mathcal{V}_{X}^{*}=\left\{ v\in\mathcal{V}\,:\,\lambda_{\min}(v)\geq B>0\right\} .
\]
The eigenvalues of $E[p(X)p(X)'\mid V=v]$ are then bounded away from
zero uniformly over $v\in\mathcal{V}_{X}^{*}$, and a sufficient condition
for identification is that Assumption \ref{ass:P(X)posddef}(b) holds
with $\widetilde{\mathcal{V}}\subseteq\mathcal{V}_{X}^{*}$. 
\begin{thm}
For some $B>0$, if either Assumption \ref{ass:P(X)posddef}(a) holds
with $\widetilde{\mathcal{X}}\subseteq\mathcal{X}_{V}^{*}$, or Assumption
\ref{ass:P(X)posddef}(b) holds with $\widetilde{\mathcal{V}}\subseteq\mathcal{V}_{X}^{*}$,
then $E[w(X,V)w(X,V)']$ exists and is positive definite\label{thm:Theorem3} 
\end{thm}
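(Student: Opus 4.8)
The plan is to show positive definiteness of $M:=E[w(X,V)w(X,V)']$ by the standard device of taking an arbitrary conformable vector $c\neq 0$ and bounding $c'Mc$ away from zero. Write $w(X,V)=p(X)\otimes q(V)$, and partition $c$ conformably with $p(X)$, so that $c'w(X,V)=\sum_{k}p_{k}(X)\,(c_{k}'q(V))$ where $c_{k}$ is the block of $c$ multiplying $p_{k}(X)$. The first step is to record that $M$ exists: under Assumption \ref{ass:P(X)posddef}(a) we have $E[p(X)p(X)']$ finite and $\sup_{x}E[\|q(V)\|^{2}\mid X=x]\leq C$, so by the law of iterated expectations $E[\|w(X,V)\|^{2}]=E\big[\|p(X)\|^{2}\,E[\|q(V)\|^{2}\mid X]\big]\leq C\,E[\|p(X)\|^{2}]<\infty$; the argument under (b) is symmetric with the roles of $X$ and $V$ swapped.

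For the definiteness itself, treat case (a). By iterated expectations,
\begin{equation}
c'Mc=E\big[(c'w(X,V))^{2}\big]=E\big[E[(c'w(X,V))^{2}\mid X]\big]\geq E\big[1(X\in\widetilde{\mathcal{X}})\,E[(c'w(X,V))^{2}\mid X]\big].\label{eq:propplan}
\end{equation}
Now fix $x\in\widetilde{\mathcal{X}}\subseteq\mathcal{X}_{V}^{*}$ and let $a(x)$ be the vector with components $c_{k}'\!\cdot$ — more precisely, stacking, $c'w(x,V)=\big(\sum_{k}p_{k}(x)c_{k}\big)'q(V)=:d(x)'q(V)$ with $d(x):=\sum_{k}p_{k}(x)c_{k}$. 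Then $E[(c'w(X,V))^{2}\mid X=x]=d(x)'E[q(V)q(V)'\mid X=x]\,d(x)\geq \lambda_{\min}(x)\,\|d(x)\|^{2}\geq B\,\|d(x)\|^{2}$ for $x\in\mathcal{X}_{V}^{*}$. Substituting into \eqref{eq:propplan} gives $c'Mc\geq B\,E[1(X\in\widetilde{\mathcal{X}})\,\|d(X)\|^{2}]=B\,E\big[1(X\in\widetilde{\mathcal{X}})\,\big\|\textstyle\sum_{k}p_{k}(X)c_{k}\big\|^{2}\big]$. Writing this last expectation coordinate-by-coordinate in the components of the $c_{k}$'s, it equals $\sum_{\ell}\tilde c_{\ell}'E[1(X\in\widetilde{\mathcal{X}})p(X)p(X)']\tilde c_{\ell}$ where $\tilde c_{\ell}$ collects the $\ell$-th coordinates across blocks; since $E[1(X\in\widetilde{\mathcal{X}})p(X)p(X)']$ is positive definite by Assumption \ref{ass:P(X)posddef}(a) and $c\neq 0$ forces some $\tilde c_{\ell}\neq 0$, this sum is strictly positive. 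Hence $c'Mc>0$. Case (b) is identical after interchanging $(X,p)$ with $(V,q)$, using $\mathcal{V}_{X}^{*}$, $\lambda_{\min}(v)$, and Assumption \ref{ass:P(X)posddef}(b).

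The only genuinely delicate point is the bookkeeping of the Kronecker structure: one must be careful that $c'(p(X)\otimes q(V))$ reorganises correctly as $d(X)'q(V)$ with $d(X)$ linear in $p(X)$, and then that $E[1(X\in\widetilde{\mathcal{X}})\|d(X)\|^{2}]$ reassembles into a positive quadratic form in $c$ against $E[1(X\in\widetilde{\mathcal{X}})p(X)p(X)']$ acting coordinatewise. Everything else is routine: the existence of $M$ is a one-line iterated-expectation bound using the uniform second-moment hypothesis, and the eigenvalue step is just the Rayleigh quotient inequality applied on the set where $\lambda_{\min}\geq B$. I would present case (a) in full and remark that (b) follows by symmetry; this mirrors the structure of Theorem \ref{thm:Theorem1}, of which Theorem \ref{thm:Theorem3} is the natural generalisation obtained by replacing the ``$q(V)=(1,V)'$ has positive conditional variance'' argument with the abstract eigenvalue bound.
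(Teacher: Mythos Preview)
Your argument is correct and follows essentially the same route as the paper: condition on $X$, apply the eigenvalue lower bound $E[q(V)q(V)'\mid X=x]\geq B\,I_K$ on $\widetilde{\mathcal X}\subseteq\mathcal X_V^{*}$, and then invoke positive definiteness of $E[1(X\in\widetilde{\mathcal X})p(X)p(X)']$. The only cosmetic difference is that the paper works directly at the matrix level---writing $E[w(X,V)w(X,V)']=E[\{p(X)p(X)'\}\otimes E[q(V)q(V)'\mid X]]\geq E[1(X\in\widetilde{\mathcal X})p(X)p(X)']\otimes B\,I_K$ in the PSD ordering---whereas you unwind the same inequality against a test vector $c$; your explicit treatment of existence of $M$ is a small bonus the paper leaves implicit.
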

\begin{rem}
\label{rem:Remark2}For the baseline specifications, Proposition \ref{prop:Proposition2}
in Appendix \ref{sec:ProofRemark2} shows that the conditions of Theorem
\ref{thm:Theorem1} satisfy those of Theorem \ref{thm:Theorem3}.
In the simple case $q(V)=(1,V)'$, if Assumption \ref{ass:P(X)posddef}(a)
holds with $\widetilde{\mathcal{X}}\subseteq\mathcal{X}_{V}^{o}$
then $\text{Var}(V\mid X=x)\geq B>0$ for each $x\in\mathcal{X}_{V}^{o}$,
and Assumption \ref{ass:P(X)posddef}(a) also holds with $\widetilde{\mathcal{X}}\subseteq\mathcal{X}_{V}^{*}$.
In the simple case $p(X)=(1,X)'$, if Assumption \ref{ass:P(X)posddef}(b)
holds with $\widetilde{\mathcal{V}}\subseteq\mathcal{V}_{X}^{o}$
then $\text{Var}(X\mid V=v)\geq B>0$ for each $v\in\mathcal{V}_{X}^{o}$,
and Assumption \ref{ass:P(X)posddef}(b) also holds with $\widetilde{\mathcal{X}}\subseteq\mathcal{X}_{V}^{*}$. 
\end{rem}

\subsection{Discussion}

Theorem \ref{thm:Theorem3} gives a general identification result
for models with regressors of a kronecker product form $w(X,V)=p(X)\otimes q(V)$.
By standard results such as those of Newey and McFadden (1994), $\beta_{\tau}$ in (\ref{eq:PP Second Stage}) is identified for each $\tau\in\mathcal{T}$, and positive
definiteness of the matrix $E\left[w(V,X)w(V,X)^{\prime}\right]$
is then a sufficient condition for uniqueness of the CRFs with probability
one. Thus the conditions of Theorem \ref{thm:Theorem3} are also sufficient
for the models we consider to identify their corresponding structural
functions. 
\begin{thm}
Suppose the assumptions of Theorem \ref{thm:Theorem3} are satisfied.
If Assumption \ref{ass:PPTModel} holds with $\mathcal{T}=\mathcal{Y}$
or $\mathcal{U}$ then the average, distribution and quantile structural
functions are identified. If Assumption \ref{ass:PPTModel} holds
with $\mathcal{T}=\{0\}$ then the average structural function is
identified.\label{thm:Theorem4} 
\end{thm}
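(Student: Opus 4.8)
The plan is to chain together the three identification results already established in the excerpt: Theorem \ref{thm:Theorem3} delivers positive definiteness of $E[w(X,V)w(X,V)']$, and standard extremum-estimator identification arguments (e.g.\ Newey and McFadden, 1994) then deliver uniqueness of the coefficient vector $\beta_{\tau}$, hence of the control regression function $\varphi_{\tau}(X,V)=\beta_{\tau}'w(X,V)$, for each $\tau\in\mathcal{T}$; finally, the structural functions are written as explicit functionals of $\{\varphi_{\tau}\}_{\tau\in\mathcal{T}}$ via the integral representations in (\ref{eq:structure}), so identification of the $\varphi_{\tau}$ transfers to them. The argument splits cleanly by the choice of index set $\mathcal{T}$.

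First I would treat the case $\mathcal{T}=\mathcal{Y}$. Under Assumption \ref{ass:PPTModel}(b), $\Gamma^{-1}(F_{Y\mid XV}(\tau\mid X,V))=\beta_{\tau}'w(X,V)$ for every $\tau\in\mathcal{Y}$. Since Theorem \ref{thm:Theorem3} gives that $E[w(X,V)w(X,V)']$ is finite and positive definite, the linear projection coefficient is unique, so $\beta_{\tau}$ is identified for each $\tau$; because $\Gamma$ is a strictly increasing continuous CDF, $F_{Y\mid XV}(\tau\mid x,v)=\Gamma(\beta_{\tau}'w(x,v))$ is then identified at \emph{every} $(x,v)\in\mathcal{X}\times\mathcal{V}$, not merely on the joint support $\mathcal{X}\mathcal{V}$. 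This is the point where the functional-form restriction does its work — it extends knowledge of the CRF off the support. Given identification of $F_{Y\mid XV}$ on the full rectangle $\mathcal{X}\times\mathcal{V}$, together with the identified marginal $F_V$, the distribution structural function $G(y,x)=\int_{\mathcal{V}}F_{Y\mid XV}(y\mid x,v)F_V(dv)$ is identified by (\ref{eq:structure}); the quantile structural function $Q(p,x)=G^{\leftarrow}(p,x)$ is then identified as the generalized inverse; and the average structural function follows from $\mu(x)=\int y\,d_yG(y,x)$ (equivalently the first line of (\ref{eq:structure}), using that $E[Y\mid X=x,V=v]=\int y\,d_yF_{Y\mid XV}(y\mid x,v)$ is identified pointwise once $F_{Y\mid XV}$ is). The case $\mathcal{T}=\mathcal{U}$ is analogous: identification of $\beta_{\tau}$ for each $\tau=u\in(0,1)$ identifies $Q_{Y\mid XV}(u\mid x,v)=\beta_u'w(x,v)$ on $\mathcal{X}\times\mathcal{V}$, and since (for continuous $Y$) the conditional quantile function determines the conditional CDF, one recovers $F_{Y\mid XV}$ on the rectangle and proceeds exactly as before.

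For the case $\mathcal{T}=\{0\}$, only the conditional mean $E[Y\mid X=x,V=v]=\beta_0'w(x,v)$ is restricted, so the same projection argument identifies $\beta_0$ and hence $E[Y\mid X=x,V=v]$ on all of $\mathcal{X}\times\mathcal{V}$; integrating against $F_V$ gives $\mu(x)=\int_{\mathcal{V}}\beta_0'w(x,v)\,F_V(dv)=\beta_0'\bigl[p(x)\otimes E[q(V)]\bigr]$, so the average structural function is identified, but $G$ and $Q$ need not be, which is why the theorem states only the weaker conclusion in this case. The only mild technical points to check are that the integrals in (\ref{eq:structure}) are well defined — which holds because, thanks to the parametric form, $\varphi_\tau$ is now known on the \emph{whole} of $\mathcal{X}\times\mathcal{V}$ so integration against $F_V$ is over the genuine support of $V$ for each fixed $x$ — and that Assumption \ref{ass:PPTModel}(a) (conditional independence of $X$ and $\varepsilon$ given $V$) is what validates the representations in (\ref{eq:structure}) in the first place, as established by Blundell and Powell (2003) and Imbens and Newey (2009). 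The main obstacle, such as it is, is bookkeeping rather than mathematics: being careful that the step from $\mathcal{X}\mathcal{V}$ to the full rectangle $\mathcal{X}\times\mathcal{V}$ is exactly what the positive-definiteness conclusion of Theorem \ref{thm:Theorem3} buys, and that no additional full-support condition sneaks back in when passing from the CRF to the structural functional.
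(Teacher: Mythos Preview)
Your proposal is correct and follows the same logical chain as the paper: invoke Theorem~\ref{thm:Theorem3} for positive definiteness of $E[w(X,V)w(X,V)']$, deduce identification of $\beta_{\tau}$ and hence of $\varphi_{\tau}$ on all of $\mathcal{X}\times\mathcal{V}$, then recover the structural functions via the representations in~(\ref{eq:structure}). The only difference is that the paper's own proof is much terser---after appealing to Theorem~\ref{thm:Theorem3} it simply cites Theorem~1 of Chernozhukov et al.\ (2017) for the remaining steps---whereas you have written out the argument that reference encapsulates.
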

The formulation of identification conditions in terms of the second
conditional moment matrices of $p(X)$ and $q(V)$ is a considerable
simplification relative to existing conditions in the literature.
The assumptions of Theorems \ref{thm:Theorem1}-\ref{thm:Theorem4}
are more primitive and easier to interpret than the dominance condition
proposed by Chernozhukov et al. (2017) for positive definiteness of
$E[w(X,V)w(X,V)']$.\footnote{Chernozhukov et al. (2017) assume that the joint probability distribution
of $X$ and $V$ dominates a product probability measure $\mu(x)\times\rho(v)$
such that $E_{\mu}[p(X)p(X)']$ and $E_{\rho}[q(V)q(V)']$ are positive
definite. This condition is sufficient for $E[w(X,V)w(X,V)']$ to
be positive definite, but is difficult to interpret.} For instance, for the baseline specifications these assumptions provide
transparent testable implications using empirical estimates of common
statistical objects, for both triangular systems (e.g., $Q_{X|Z}(v|z)$
and $F_{X|Z}(x|z)$ in Section \ref{subsec:BaselineID}) and treatment
effect models (e.g., $P(V)$ in condition (\ref{eq:IdBinary})). These
conditions are also weaker than the 
full support condition 
or the measurable separability condition of Florens et al. (2008),
which require the control variable to have a continuous distribution
conditional on $X$.

In a triangular system with control variable $V=F_{X\mid Z}(X\mid Z)$,
our identification conditions 
admit an equivalent formulation in terms of the first
stage model and the instrument $Z$. Letting $\widetilde{\lambda}_{\min}(x)$
denote the smallest eigenvalue of 
\[
E[q(F_{X\mid Z}(X\mid Z))q(F_{X\mid Z}(X\mid Z))^{\prime}\mid X=x],
\]
for $x\in\mathcal{X}$, for some $B>0$ define the corresponding set $\mathcal{X}_{Z}^{*}=\{x\in\mathcal{X}\,:\,\widetilde{\lambda}_{\min}(x)\geq B>0\}$.
Then $\widetilde{\lambda}_{\min}(x)=\lambda_{\min}(x)$ and $\mathcal{X}_{Z}^{*}=\mathcal{X}_{V}^{*}$.
Thus Assumption \ref{ass:P(X)posddef}(a) with $\widetilde{\mathcal{X}}\subseteq\mathcal{X}_{Z}^{*}$
is sufficient for identification by Theorem \ref{thm:Theorem3}. Alternatively,
letting $\widetilde{\lambda}_{\min}(v)$ denote the smallest eigenvalue
of 
\[
E[p(Q_{X\mid Z}(v\mid Z))p(Q_{X\mid Z}(v\mid Z))^{\prime}],
\]
for $v\in(0,1)$, for some $B>0$ define the corresponding set $\mathcal{V}_{Z}^{*}=\{v\in(0,1)\,:\,\widetilde{\lambda}_{\min}(v)\geq B>0\}$.
Then, by independence of $V$ from $Z$, $\widetilde{\lambda}_{\min}(v)=\lambda_{\min}(v)$
and $\mathcal{V}_{Z}^{*}=\mathcal{V}_{X}^{*}$. Thus Assumption \ref{ass:P(X)posddef}(b)
with $\widetilde{\mathcal{V}}\subseteq\mathcal{V}_{Z}^{*}$ is sufficient
for identification by Theorem \ref{thm:Theorem3}. 

\section{Partially Nonparametric Specifications}

An important generalisation of the parametric specifications of the
previous section is one where either the relationship between $X$
and $Y$ or between $V$ and $Y$ is unspecified in the CRFs. This
gives rise to two classes of models with known functional form of
either how $X$ affects the CRFs or how $V$ affects the CRFs, but
not both. These models are special cases of functional coefficient
regression models.

The first class of partially nonparametric models we consider is one
where $X$ is known to affect the CRF $\varphi_{\tau}(X,V)$ only
through a vector of known functions $p(X)$. We assume that 
\begin{equation}
\varphi_{\tau}(X,V)=p(X)^{\prime}q_{\tau}(V),\quad\tau\in\mathcal{T},\label{eq:Model1}
\end{equation}
where the vector of functions $q_{\tau}(V)$ is now unknown, rather
than a linear combination of finitely many known transformations of
$V$. An example of a structural model that gives rise to CRFs as
in (\ref{eq:Model1}) is the heterogeneous coefficients model 
\[
Y=g(X,\varepsilon)=p(X)'\varepsilon,\quad E[\varepsilon\mid X,V]=E[\varepsilon\mid V],\quad E[\varepsilon\mid V]=:q_{0}(V).
\]
This model is studied in Masten and Torgovitsky (2016)\textbf{ }and
Newey and Stouli (2018), and generalises the polynomial specifications
of Florens et al. (2008) to allow $p(X)$ to be any functions of $X$
rather than just powers of $X$. The corresponding mean CRF of $Y$
conditional on $(X,V)$ is 
\begin{equation}
E\left[Y\mid X,V\right]=p(X)^{\prime}E\left[\varepsilon\mid X,V\right]=p(X)^{\prime}E\left[\varepsilon\mid V\right]=p(X)^{\prime}q_{0}(V),\label{eq:eps_j-1-1}
\end{equation}
which has the form of (\ref{eq:Model1}) with $\mathcal{T}=\{0\}$
and $\tau=0$. When the outcome $Y=\sum_{j=1}^{J}p_{j}(X)\varepsilon_{j}$
is continuous, if the unobserved heterogeneity components $\varepsilon_{j}$
further satisfy 
the 
conditional independence property 
\begin{equation}
\varepsilon_{j}=Q_{\varepsilon_{j}\mid XV}(U\mid X,V)=Q_{\varepsilon_{j}\mid V}(U\mid V),\;U\mid X,V\sim U(0,1),\;j\in\{1,\ldots,J\},\label{eq:eps_j-1}
\end{equation}
where the unobservable $U$ is the same for each $\varepsilon_{j}$,
then the control quantile regression function of $Y$ conditional
on $(X,V)$ is 
\[
Q_{Y\mid XV}(u\mid X,V)=\sum_{j=1}^{J}p_{j}(X)Q_{\varepsilon_{j}\mid V}(u\mid V)=p(X)^{\prime}q_{u}(V),\quad u\in\mathcal{U},
\]
with $q_{u}(v):=(Q_{\varepsilon_{1}\mid V}(u\mid v),\ldots Q_{\varepsilon_{J}\mid V}(u\mid v))'$,
which has the form of (\ref{eq:Model1}) with $\mathcal{T}=\mathcal{U}$
and $\tau=u$. Thus this is a model with known functional form of
how $X$ affects the control conditional mean and quantile functions.

The second class of partially nonparametric models we consider is
one where $V$ is known to affect the CRF $\varphi_{\tau}(X,V)$ only
through a vector of known functions $q(V)$. We assume that 
\begin{equation}
\varphi_{\tau}(X,V)=p_{\tau}(X)^{\prime}q(V),\quad\tau\in\mathcal{T},\label{eq:Model2}
\end{equation}
where the vector of functions $p_{\tau}(X)$ is now unknown, rather
than just a linear combination of finitely many known transformations
of $X$. An example of a structural model that gives rise to CRFs
as in (\ref{eq:Model2}) is the heterogeneous coefficients model 
\[
Y=g(X,\varepsilon)=p_{0}(X)'\varepsilon,\quad E[\varepsilon\mid X,V]=E[\varepsilon\mid V],\quad E[\varepsilon\mid V]=q(V),
\]
where $p_{0}(X)$ is a vector of unknown functions, while $q(V)$
is a vector of known functions. In the simplest case with $q(V)=(1,V)'$,
the corresponding mean CRF of $Y$ conditional on $(X,V)$ is 
\begin{equation}
E\left[Y\mid X,V\right]=p_{0}(X)'E\left[\varepsilon\mid X,V\right]=p_{0}(X)^{\prime}E\left[\varepsilon\mid V\right]=p_{0}(X)^{\prime}q(V),\label{eq:eps_j-1-1-1}
\end{equation}
which has the form of (\ref{eq:Model2}) with $\mathcal{T}=\{0\}$
and $\tau=0$. 

With $V$ normalised to satisfy $E[V]=0$, the corresponding average
structural function takes the form 
\[
\mu(X)=\int_{\mathcal{V}}\{p_{0}(X)^{\prime}q(v)\}F_{V}(dv)=p_{01}(X)+p_{02}(X)E[V]=p_{01}(X).
\]

Specifications (\ref{eq:eps_j-1-1}) and (\ref{eq:eps_j-1-1-1}) illustrate
the range of models allowed by partially nonparametric specifications.
For treatment effect models, the choice of specification (\ref{eq:eps_j-1-1})
is dictated by the definition of $X$ as a vector of dummy variables
for each treatment, which are known functions of $X$. For triangular
models, the choice of specification (\ref{eq:eps_j-1-1-1}) allows
for a fully flexible average structural function specification, while
restricting the relationship between the CRFs and $V$ to belong to
a known class of functions, e.g., to be linear when $q(V)=(1,V)'$.
In practice, a richer support of the instrument will allow for a more
flexible relationship, and hence make the choice of either class of
CRFs less restrictive. When the instrument takes a small number of
values, existing model selection methods such as $\ell_{1}$-penalized
quantile (Belloni and Chernozhukov, 2011), distribution (Belloni et
al., 2017), and mean regression (Tibshirani, 1996) provide natural
avenues for empirical specification of CRFs. 
\begin{rem}
Additional exogenous covariates $Z_{1}$ can be incorporated straightforwardly
in these models through the known functional component of the CRF
$\varphi_{\tau}(X,V)$. With an exogenous vector of covariates $Z_{1}$,
model (\ref{eq:Model1}) takes the form 
\[
\varphi_{\tau}(X,Z_{1},V)=p(X,Z_{1})^{\prime}q_{\tau}(V),
\]
where $p(X,Z_{1})$ is a vector of known functions of $(X,Z_{1})$,
and model (\ref{eq:Model2}) takes the form 
\[
\varphi_{\tau}(X,Z_{1},V)=p_{\tau}(X)^{\prime}q(Z_{1},V),
\]
where $q(Z_{1},V)$ is a vector of known functions of $(Z_{1},V)$.\qed 
\end{rem}
\begin{sloppy}

The following assumption gathers the two classes of partially nonparametric
specifications.

\begin{assumption}(a) For a specified set $\mathcal{T}=\{0\}$, $\mathcal{Y}$,
or $\mathcal{U}$, and each $\tau\in\mathcal{T}$, the outcome $Y$
conditional on $(X,V)$ follows the model 
\begin{equation}
\varphi_{\tau}(X,V)=p(X)'q_{\tau}(V);\label{eq:Model1_2}
\end{equation}
we have $E\left[Y^{2}\right]<\infty$ and $E[||p(X)||^{2}]<\infty$;
and $E[p(X)p(X)' | V]$ exists and is nonsingular with probability
one; or (b) for a specified set $\mathcal{T}=\{0\}$, $\mathcal{Y}$,
or $\mathcal{U}$, and each $\tau\in\mathcal{T}$, the outcome $Y$
conditional on $(X,V)$ follows the model 
\begin{equation}
\varphi_{\tau}(X,V)=q(V)'p_{\tau}(X);\label{eq:Model2_2}
\end{equation}
we have $E\left[Y^{2}\right]<\infty$ and $E[||q(V)||^{2}]<\infty$;
and $E[q(V)q(V)' | X]$ exists and is nonsingular with probability
one.\label{ass:NPDRModel}\end{assumption}

The next result states our main identification result of this section. 
\begin{thm}
(i) If Assumption \ref{ass:NPDRModel}(a) holds then $q_{\tau}\left(V\right)$
is identified for each $\tau\in\mathcal{T}$. (ii) If Assumption \ref{ass:NPDRModel}(b)
holds then $p_{\tau}\left(X\right)$ is identified for each $\tau\in\mathcal{T}$.\label{thm:Theorem5} 
\end{thm}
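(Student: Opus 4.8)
The plan is to treat the control regression function $\varphi_{\tau}$ as an already-identified object and then recover the functional coefficient by a conditional least-squares projection onto the known transformations. First I would observe that, for each admissible index set $\mathcal{T}=\{0\}$, $\mathcal{Y}$, or $\mathcal{U}$, the function $\varphi_{\tau}(x,v)$ is identified directly from the joint distribution of $(Y,X,V)$: by definition it is either the conditional mean $E[Y\mid X=x,V=v]$, the monotone transform $\Gamma^{-1}(F_{Y\mid XV}(\tau\mid x,v))$ of the conditional CDF, or the conditional quantile $Q_{Y\mid XV}(\tau\mid x,v)$, each a functional of the conditional law of $Y$ given $(X,V)$ and hence identified when $V$ is observable. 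So in what follows $\varphi_{\tau}$, together with the known maps $p$ and $q$, may be treated as known, and ``identification'' of $q_{\tau}$ (resp. $p_{\tau}$) means identification as an element of $L^{2}$, i.e. up to almost sure equivalence.

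For part (i), fix $\tau\in\mathcal{T}$. Under Assumption \ref{ass:NPDRModel}(a) we have $\varphi_{\tau}(X,V)=p(X)'q_{\tau}(V)$ almost surely. Multiplying both sides by $p(X)$ and taking the conditional expectation given $V$ gives
\[
E[p(X)\varphi_{\tau}(X,V)\mid V]=E[p(X)p(X)'\mid V]\,q_{\tau}(V)\qquad\text{a.s.}
\]
Since $E[p(X)p(X)'\mid V]$ exists and is nonsingular with probability one, on the full-probability event where nonsingularity holds I can invert it to obtain the closed form
\[
q_{\tau}(V)=\big\{E[p(X)p(X)'\mid V]\big\}^{-1}E[p(X)\varphi_{\tau}(X,V)\mid V]\qquad\text{a.s.}
\]
The right-hand side is a functional of the joint distribution of $(Y,X,V)$ and of the known function $p$, hence identified; therefore $q_{\tau}$ is identified. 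In the conditional-mean case ($\mathcal{T}=\{0\}$), the law of iterated expectations further reduces the right-hand side to $\{E[p(X)p(X)'\mid V]\}^{-1}E[p(X)Y\mid V]$, making the identification entirely explicit in observables. Part (ii) is the mirror image: under Assumption \ref{ass:NPDRModel}(b), $\varphi_{\tau}(X,V)=q(V)'p_{\tau}(X)$ a.s.; multiplying by $q(V)$ and conditioning on $X$ yields $E[q(V)\varphi_{\tau}(X,V)\mid X]=E[q(V)q(V)'\mid X]\,p_{\tau}(X)$, and nonsingularity of $E[q(V)q(V)'\mid X]$ with probability one gives $p_{\tau}(X)=\{E[q(V)q(V)'\mid X]\}^{-1}E[q(V)\varphi_{\tau}(X,V)\mid X]$ a.s.

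The routine but necessary checks are integrability and the handling of null sets, and this is where I expect the only real care is needed. Existence of all the conditional expectations and legitimacy of the manipulations follow from $E[Y^{2}]<\infty$ together with $E[\|p(X)\|^{2}]<\infty$ (resp. $E[\|q(V)\|^{2}]<\infty$): in the mean case conditional Jensen gives $E[\varphi_{0}(X,V)^{2}]\le E[Y^{2}]<\infty$, so Cauchy--Schwarz bounds each coordinate $E[\,|p_{k}(X)\varphi_{0}(X,V)|\,]$; in the distribution and quantile cases $\varphi_{\tau}$ is real-valued and a truncation/dominated-convergence argument delivers the analogous bounds. Because the relevant moment matrices are assumed nonsingular only with probability one, identification is obtained in the almost-sure sense — the appropriate notion for a function of the random variable $V$ (resp. $X$) — and restricting attention to the full-probability event on which the conditional second-moment matrix is invertible is precisely what makes the displayed inversions well defined. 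An equivalent way to phrase the argument, which may be cleaner to write up, is a uniqueness statement: if $p(X)'q_{\tau}(V)=p(X)'\widetilde q_{\tau}(V)$ a.s. then $E[p(X)p(X)'\mid V]\,(q_{\tau}(V)-\widetilde q_{\tau}(V))=0$ a.s., so nonsingularity forces $q_{\tau}(V)=\widetilde q_{\tau}(V)$ a.s.
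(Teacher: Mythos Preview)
Your argument is correct and is the standard conditional-projection/uniqueness route for varying-coefficient models. The paper itself does not give a self-contained proof here: it simply writes ``The result follows from the proof of Theorem 1 in Newey and Stouli (2018),'' deferring the content to that companion paper. What you have written is essentially the argument that reference supplies, so there is no substantive difference in approach---you have just made explicit what the paper leaves to a citation. One small comment: your closed-form inversion $q_{\tau}(V)=\{E[p(X)p(X)'\mid V]\}^{-1}E[p(X)\varphi_{\tau}(X,V)\mid V]$ requires integrability of $p(X)\varphi_{\tau}(X,V)$, which is immediate in the mean case but not obviously guaranteed by the stated moment conditions when $\mathcal{T}=\mathcal{Y}$ or $\mathcal{U}$ (e.g., $\Gamma^{-1}\circ F_{Y\mid XV}$ need not be bounded). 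You correctly anticipate this and offer the uniqueness formulation at the end; I would lead with that version, since it only uses existence and almost-sure nonsingularity of $E[p(X)p(X)'\mid V]$ and sidesteps the integrability issue entirely.
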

We earlier discussed conditions for nonsingularity of $E[p(X)p(X)'\mid V]$
and $E[q(V)q(V)'\mid X]$. All those conditions are sufficient for
identification of $q_{\tau}(V)$ and $p_{\tau}(X)$, including those
that allow for discrete valued instrumental variables, under the important
stricter condition that they hold on sets of $V$ and $X$ having
probability one, respectively. We also note that identification of
$q_{\tau}(V)$ and $p_{\tau}(X)$ means uniqueness on sets of $V$
and $X$ having probability one, respectively. Thus the structural
functions corresponding to models (\ref{eq:Model1_2}) and (\ref{eq:Model2_2})
are identified. For example, in the first class of models the quantile
and distribution structural functions will be identified as 
\[
Q(p,X)=G^{\leftarrow}(p,X),\quad G(y,X)=\int_{\mathcal{V}}\Gamma\left(p(X)'q_{y}(v)\right)F_{V}(dv),
\]
since $p(X)$ and $\Gamma$ are known functions and $q_{y}(V)$ is
identified, and hence $\Gamma(p(X)'q_{y}(V))$ also is. 
\begin{thm}
Suppose Assumption \ref{ass:PPTModel}(a) holds. If Assumption \ref{ass:NPDRModel}
holds with $\mathcal{T}=\mathcal{Y}$ or $\mathcal{U}$ then the average,
distribution and quantile structural functions are identified. If
Assumption \ref{ass:NPDRModel} holds with $\mathcal{T}=\{0\}$ then
the average structural function is identified.\label{thm:Theorem6} 
\end{thm}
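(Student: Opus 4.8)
The plan is to combine the identification of the control regression functions $\varphi_\tau$ from Theorem \ref{thm:Theorem5} with the representation of the structural functions as integrals of $\varphi_\tau$ against the marginal law of $V$, exactly as in equation (\ref{eq:structure}). First I would recall that, under Assumption \ref{ass:PPTModel}(a), $X$ and $\varepsilon$ are independent conditional on $V$, which is the substantive content needed to write the average, distribution, and quantile structural functions in terms of the control regressions; the functional-form restriction in Assumption \ref{ass:PPTModel}(b) is not invoked here since it is replaced by the partially nonparametric specifications of Assumption \ref{ass:NPDRModel}. Note that the hypotheses only require Assumption \ref{ass:PPTModel}(a), so the first step is simply to observe that the conditional-independence argument underlying (\ref{eq:structure}) goes through verbatim: for fixed $x$,
\[
G(y,x)=\int_{\mathcal{V}}F_{Y\mid XV}(y\mid x,v)\,F_{V}(dv),\qquad \mu(x)=\int_{\mathcal{V}}E[Y\mid X=x,V=v]\,F_{V}(dv),
\]
and $Q(p,x)=G^{\leftarrow}(p,x)$, with the key point that these integrals are now legitimate because the integrand is identified on all of $\mathcal{X}\times\mathcal{V}$, not merely on the joint support.

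Second, I would split into the two cases of Assumption \ref{ass:NPDRModel}. In case (a), Theorem \ref{thm:Theorem5}(i) gives that $q_\tau(V)$ is identified, meaning it is pinned down on a set of $V$-values of probability one; since $p(X)$ and $\Gamma$ are known, the composite map $v\mapsto \Gamma(p(x)'q_y(v))$ is identified $F_V$-almost everywhere for each fixed $x$ and each $y$, hence so is the integral $\int_{\mathcal{V}}\Gamma(p(x)'q_y(v))F_V(dv)=G(y,x)$; monotonicity of $\Gamma$ and the definition of $G^{\leftarrow}$ then deliver identification of $Q(p,x)$, and for $\mathcal{T}=\mathcal{U}$ one argues identically with $Q_{Y\mid XV}(u\mid x,v)=p(x)'q_u(v)$, while for $\mathcal{T}=\{0\}$ the mean-regression identity gives $\mu(x)=\int_{\mathcal{V}}p(x)'q_0(v)F_V(dv)=p(x)'\int_{\mathcal{V}}q_0(v)F_V(dv)$. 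Case (b) is symmetric: Theorem \ref{thm:Theorem5}(ii) identifies $p_\tau(X)$ on a set of probability one, $q(V)$ and $\Gamma$ are known, and the same integral representations yield $G$, $Q$ and $\mu$. In both cases the fact that identification is only ``almost everywhere'' in $V$ (resp. $X$) is harmless because the integrals defining the structural functions are with respect to $F_V$ (resp. involve $p_\tau$ evaluated at the argument $X$, up to $F_X$-null sets), so null sets do not affect the value.

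The main obstacle — really the only point requiring care rather than bookkeeping — is making sure the integrals are well defined and finite, so that ``identification of the integrand a.e.'' transfers to ``identification of the integral.'' Here I would invoke the moment conditions built into Assumption \ref{ass:NPDRModel}: $E[Y^2]<\infty$ together with $E[\|p(X)\|^2]<\infty$ (case (a)) or $E[\|q(V)\|^2]<\infty$ (case (b)) guarantee, via Cauchy--Schwarz and the projection interpretation of $\varphi_0$, that $\mu(x)$ is finite for $x$ in the relevant support, and the distribution regressions $F_{Y\mid XV}\in[0,1]$ are trivially integrable so $G(y,x)$ is always well defined; the quantile case then needs only that $y\mapsto G(y,x)$ is a genuine CDF in $y$ for each fixed $x$, which follows from its representation as an average of conditional CDFs. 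Once integrability is secured, uniqueness of the integrand on a probability-one set forces uniqueness of the structural functions, completing the proof. I would close by remarking that all the sufficient conditions for nonsingularity of $E[p(X)p(X)'\mid V]$ and $E[q(V)q(V)'\mid X]$ discussed after Theorem \ref{thm:Theorem5} — including those accommodating discrete instruments — feed directly into this result through Assumption \ref{ass:NPDRModel}.
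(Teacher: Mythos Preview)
Your proposal is correct and follows essentially the same route as the paper's own proof: invoke Theorem \ref{thm:Theorem5} to identify the unknown component of the CRF, use that $p(X)$ (or $q(V)$) and $\Gamma$ are known to recover $F_{Y\mid XV}$ with probability one, and then appeal to the integral representations in (\ref{eq:structure}). Your treatment is in fact more thorough than the paper's, which only writes out case (a) of Assumption \ref{ass:NPDRModel} explicitly and does not discuss the integrability bookkeeping you include.
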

\end{sloppy}

\section{Empirical Application\label{sec:Empirical-Application}}

In this section we illustrate our identification results by estimating
the QSF for a 
triangular system for Engel curves. We focus on the structural relationship
between household's total expenditure and household's demand for two
goods: food and leisure. We take the outcome $Y$ to be the expenditure
share on either food or leisure, and $X$ the logarithm of total expenditure.
We use as an instrument a discretised version $\widetilde{Z}$ of
the logarithm of gross earnings of the head of household $Z^{*}$.
We also include an additional binary covariate $Z_{1}$ accounting
for the presence of children in the household.

There is a large literature using nonseparable triangular systems
for the identification and estimation of Engel curves (Imbens and
Newey, 2009; Chernozhukov et al., 2015, 2017). We follow Chernozhukov
et al. (2017) who consider estimation of structural functions for
food and leisure using triangular control regression specifications
in kronecker product form. For comparison purposes we use the same
dataset from Blundell et al. (2007), the 1995 U.K. Family Expenditure
Survey. We restrict the sample to 1,655 married or cohabiting couples
with two or fewer children, in which the head of the household is
employed and between the ages of 20 and 55 years. For this sample
we estimate the QSF for both goods using discrete instruments, and
then compare our results to those obtained with a continuous instrument
by Chernozhukov et al. (2017).

We consider the triangular system, 
\begin{align*}
Y & =Q_{Y\mid XV}(u\mid X,V)=\beta(U)^{\prime}[p(X)\otimes r(Z_{1})\otimes q(V)],\quad U\mid X,Z_{1},V\sim U(0,1)\\
X & =Q_{X\mid Z}(V\mid Z)=\pi(V)^{\prime}[s(\widetilde{Z})\otimes r(Z_{1})],\quad V\mid Z\sim U(0,1),\quad Z:=(\widetilde{Z},Z_{1})^{\prime},
\end{align*}
where $s(\widetilde{Z})=(1,\widetilde{Z})^{\prime}$, $r(Z_{1})=(1,Z_{1})^{\prime}$,
$p(X)=(1,X)^{\prime}$ and $q(V)=(1,\Phi^{-1}(V))^{\prime}$. The
corresponding QSFs are estimated by the quantile regression estimators
of Chernozhukov et al. (2017), described in Appendix \ref{sec:Quantile-Regression-Estimation}.
For our sample of $n=1,655$ observations $\left\{ (Y_{i},X_{i},Z_{i})\right\} _{i=1}^{n}$,
we construct two sets of four discrete valued instruments taking $M=2,3,5$
and $15$ values, respectively, and then estimate the QSFs using one
instrument at a time.\footnote{The design with discretised instruments might not be consistent with
the original specification, which is linear in the continuous instrument
$\widetilde{Z}$. Nonetheless, overall the empirical results appear
to be robust to discretisation of the instrument.} In the first set the instrument $\widetilde{Z}$ is uniformly distributed
across its support (\textsc{Design} 1). For $t_{m}=m/M$, $m\in\{0,1,\ldots,M\}$,
let $\widehat{Q}_{Z^{*}}(t_{m})$ denote the sample $t_{m}$ quantile
of $Z^{*}$. For $i\in\{1,\ldots,n\}$ and $m\in\{0,1,\ldots,M-1\}$
such that $Z_{i}^{*}\in[\widehat{Q}_{Z^{*}}(t_{m}),\widehat{Q}_{Z^{*}}(t_{m+1}))$,
we define 
\[
\widetilde{Z}_{i}=\widehat{Q}_{Z^{*}}(t_{m})+\frac{1}{2}\left[\widehat{Q}_{Z^{*}}(t_{m+1})-\widehat{Q}_{Z^{*}}(t_{m})\right].
\]
For an observation $i$ such that $Z_{i}^{*}=\max_{i\leq n}(Z_{i}^{*})$,
we define 
$\widetilde{Z}_{i}=\widehat{Q}_{Z^{*}}(t_{M-1})+\frac{1}{2}[\widehat{Q}_{Z^{*}}(t_{M})-\widehat{Q}_{Z^{*}}(t_{M-1})]$.
In the second set the instrument $\widetilde{Z}$ is discretised according
to a non uniform distribution (\textsc{Design} 2). Define the equispaced
grid $\min_{i\leq n}(Z_{i}^{*})=\xi_{0}<\xi_{1}<\ldots<\xi_{M}=\max_{i\leq n}(Z_{i}^{*})$.
For $i\in\{1,\ldots,n\}$ and $m\in\{0,\ldots,M-1\}$ such that $Z_{i}^{*}\in[\xi_{m},\xi_{m+1})$
we define 
\[
\widetilde{Z}_{i}=\xi_{m}+\frac{1}{2}\left[\xi_{m+1}-\xi_{m}\right].
\]
For an observation $i$ such that $Z_{i}^{*}=\max_{i\leq n}(Z_{i}^{*})$,
we define $\widetilde{Z}_{i}=\xi_{M-1}+\frac{1}{2}\left[\xi_{M}-\xi_{M-1}\right]$.

\begin{figure}[t]
\subfloat[$M=2$.]{\includegraphics[width=7.5cm,height=6cm]{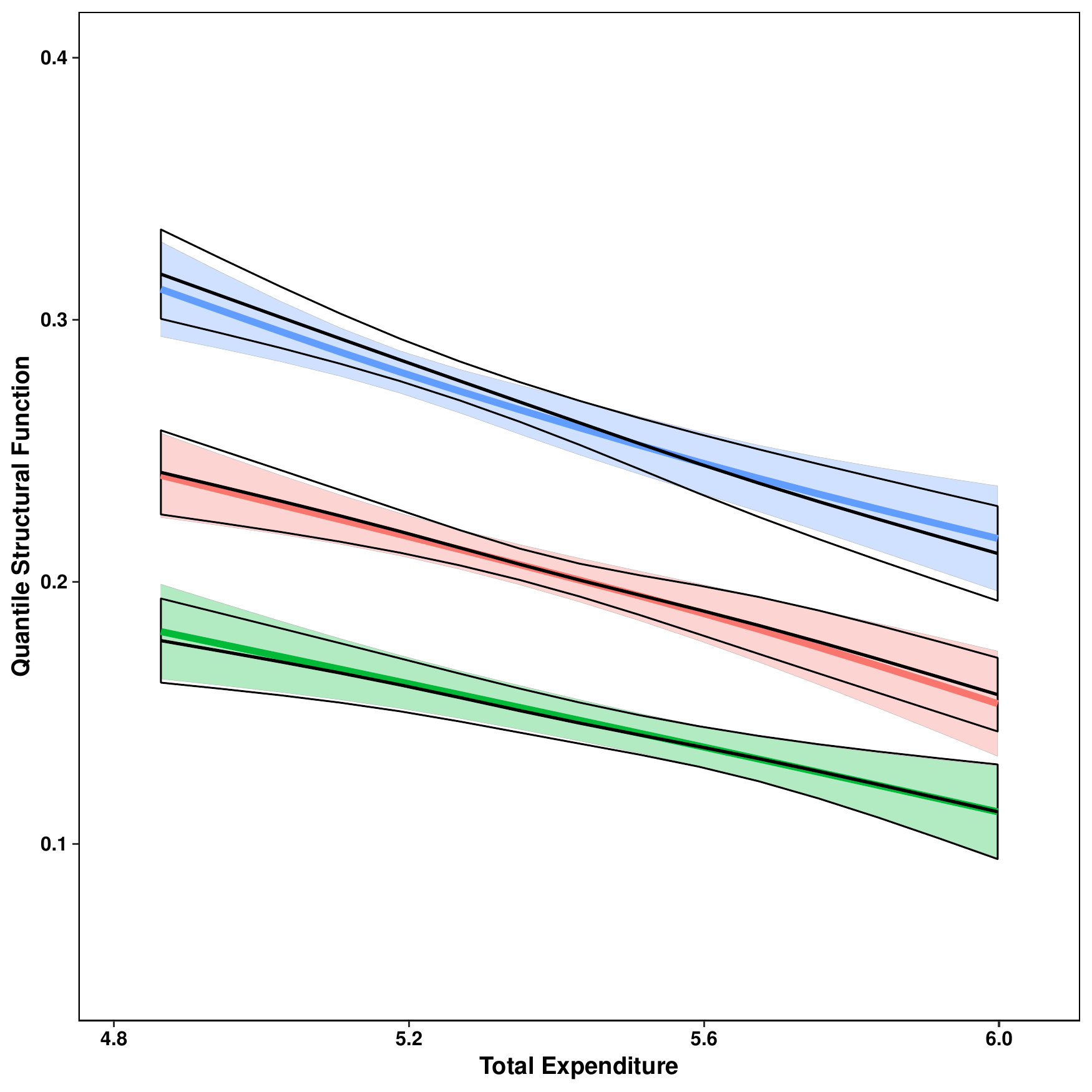} 

}\subfloat[$M=3$.]{\includegraphics[width=7.5cm,height=6cm]{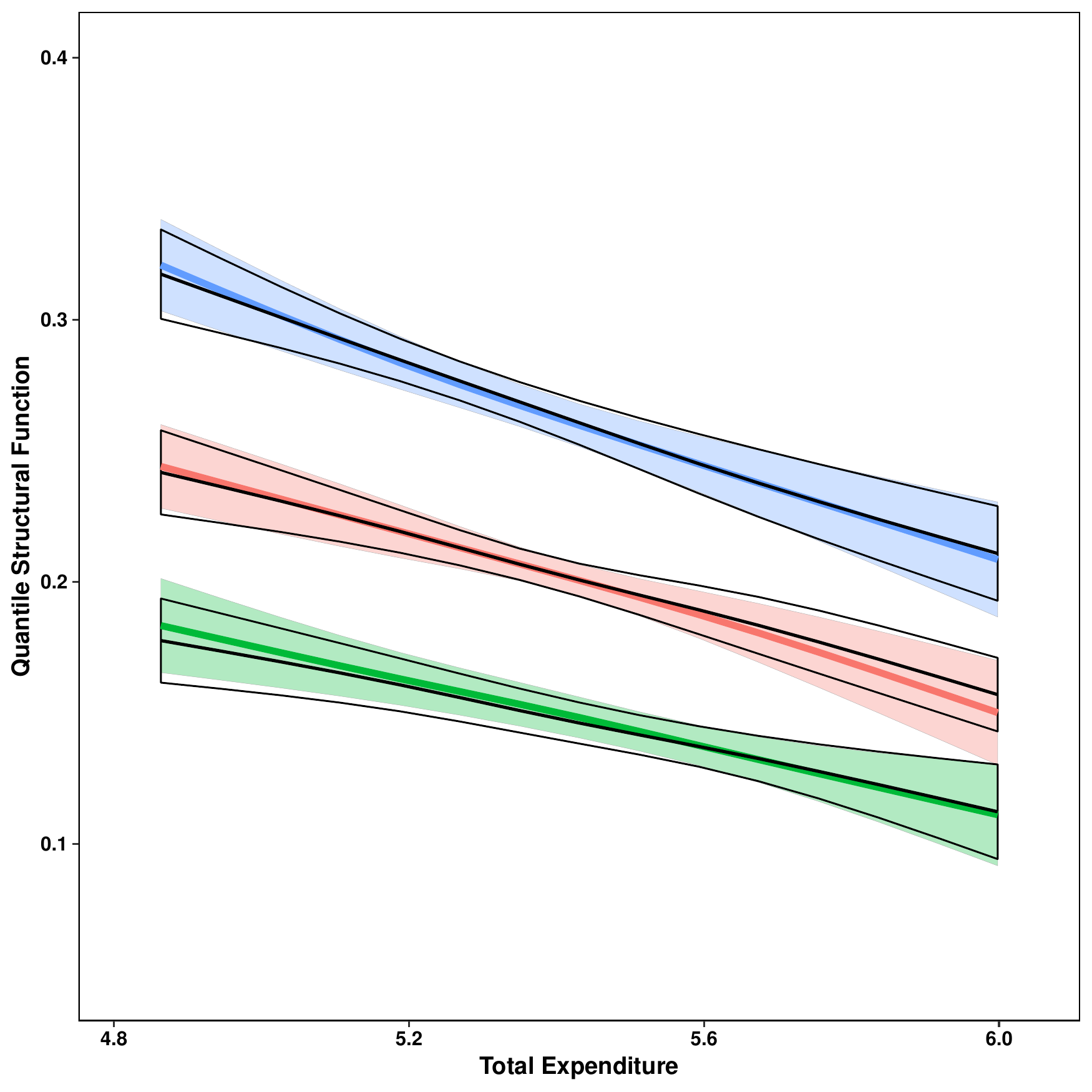} 

}

\subfloat[$M=5$.]{\includegraphics[width=7.5cm,height=6cm]{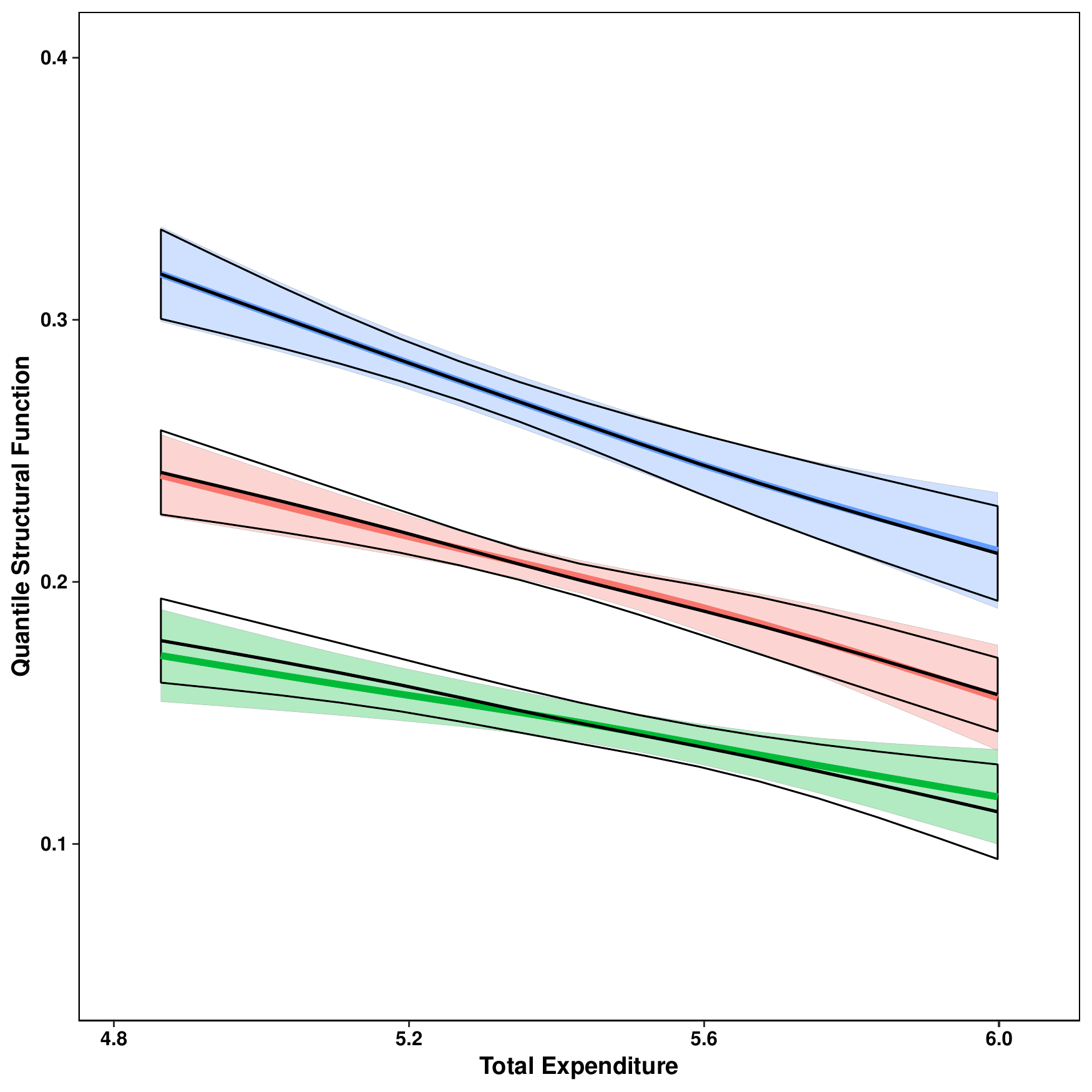} 

}\subfloat[$M=15$.]{\includegraphics[width=7.5cm,height=6cm]{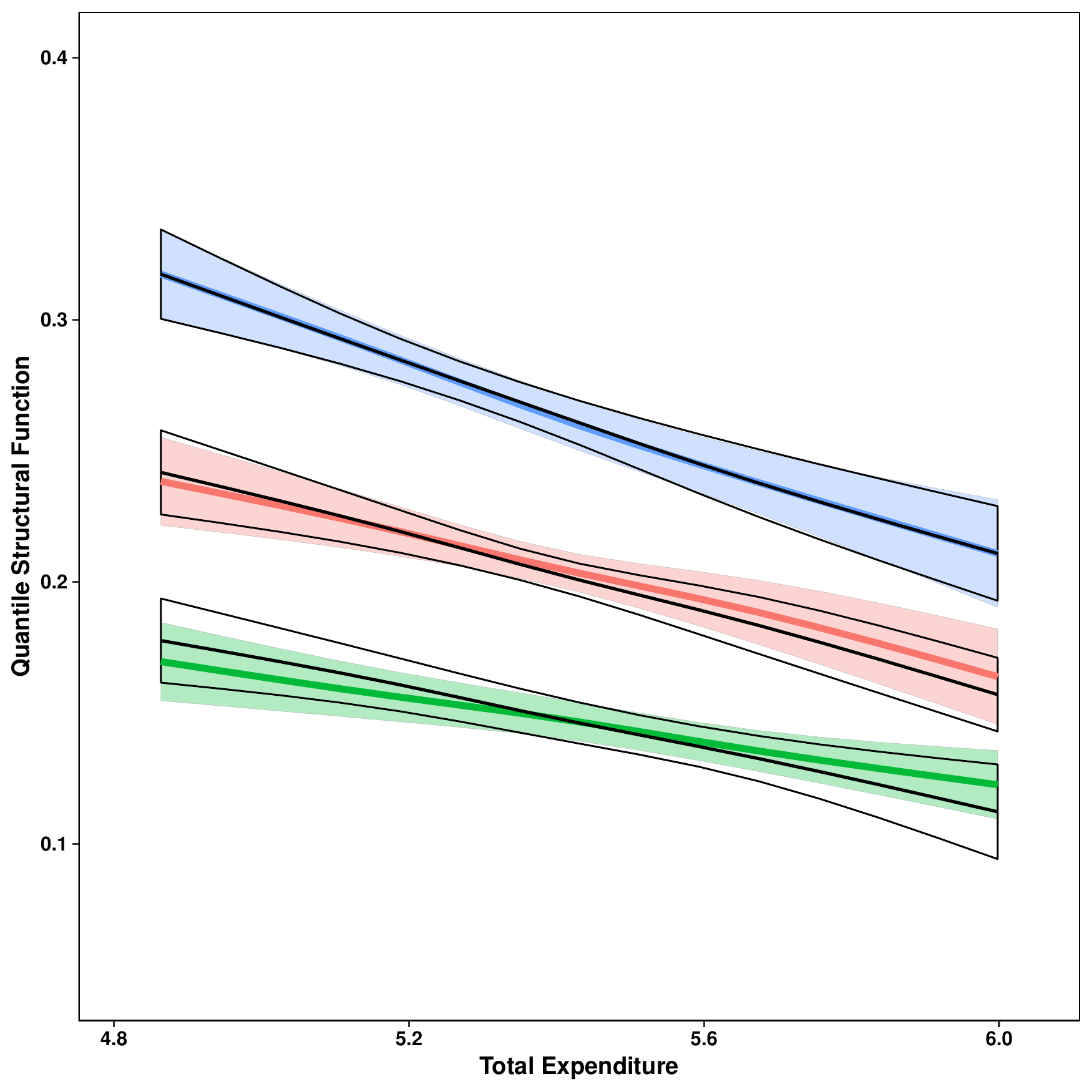} 

}

\caption{\textsc{Design} 1. QSF for food with discrete instrument $\widetilde{Z}$
(\foreignlanguage{british}{coloured}) and with continuous instrument
$Z^{*}$ (black).}
\label{fig:QSF_Des1_food} 
\end{figure}

\begin{figure}[t]
\subfloat[$M=2$.]{\includegraphics[width=7.5cm,height=6cm]{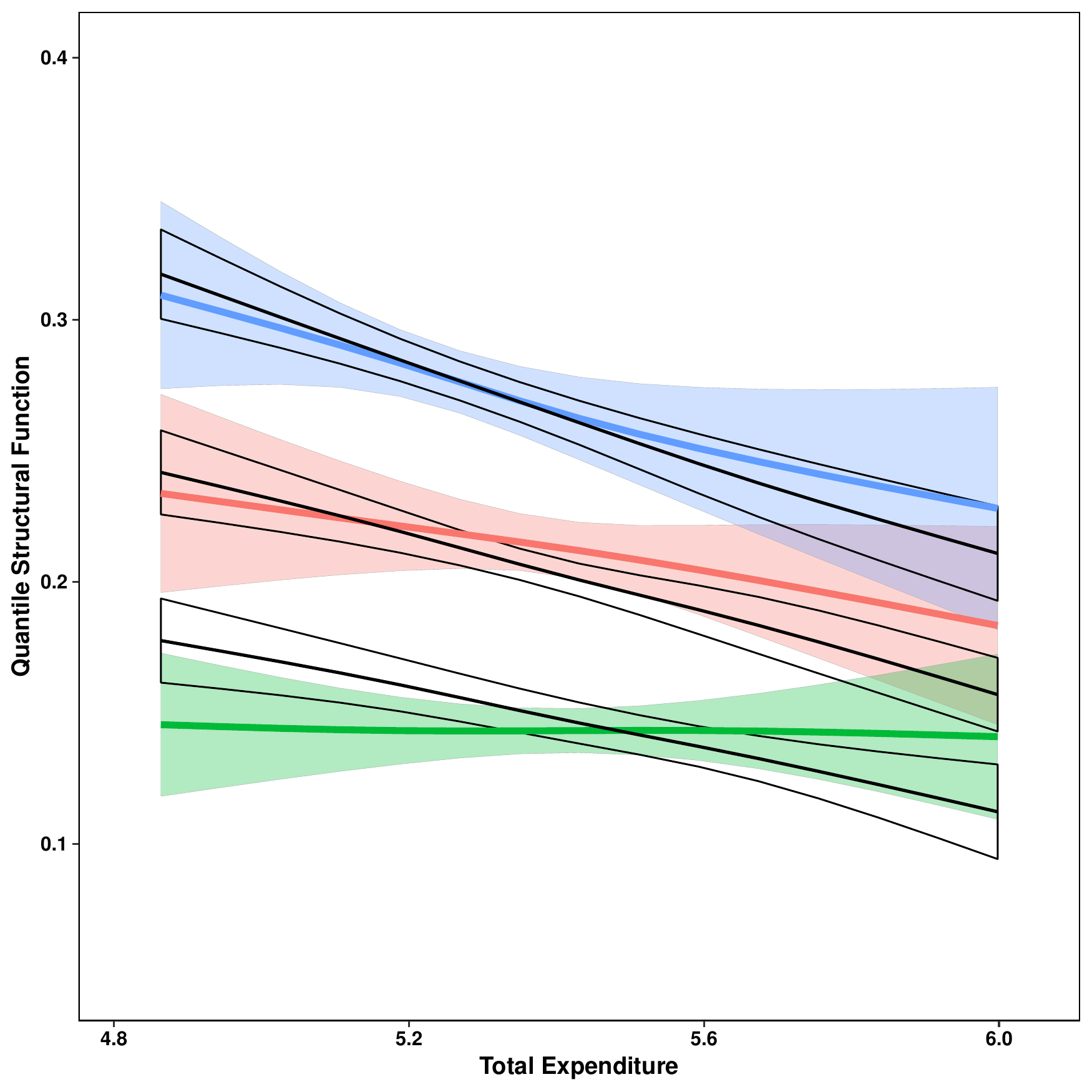} 

}\subfloat[$M=3$.]{\includegraphics[width=7.5cm,height=6cm]{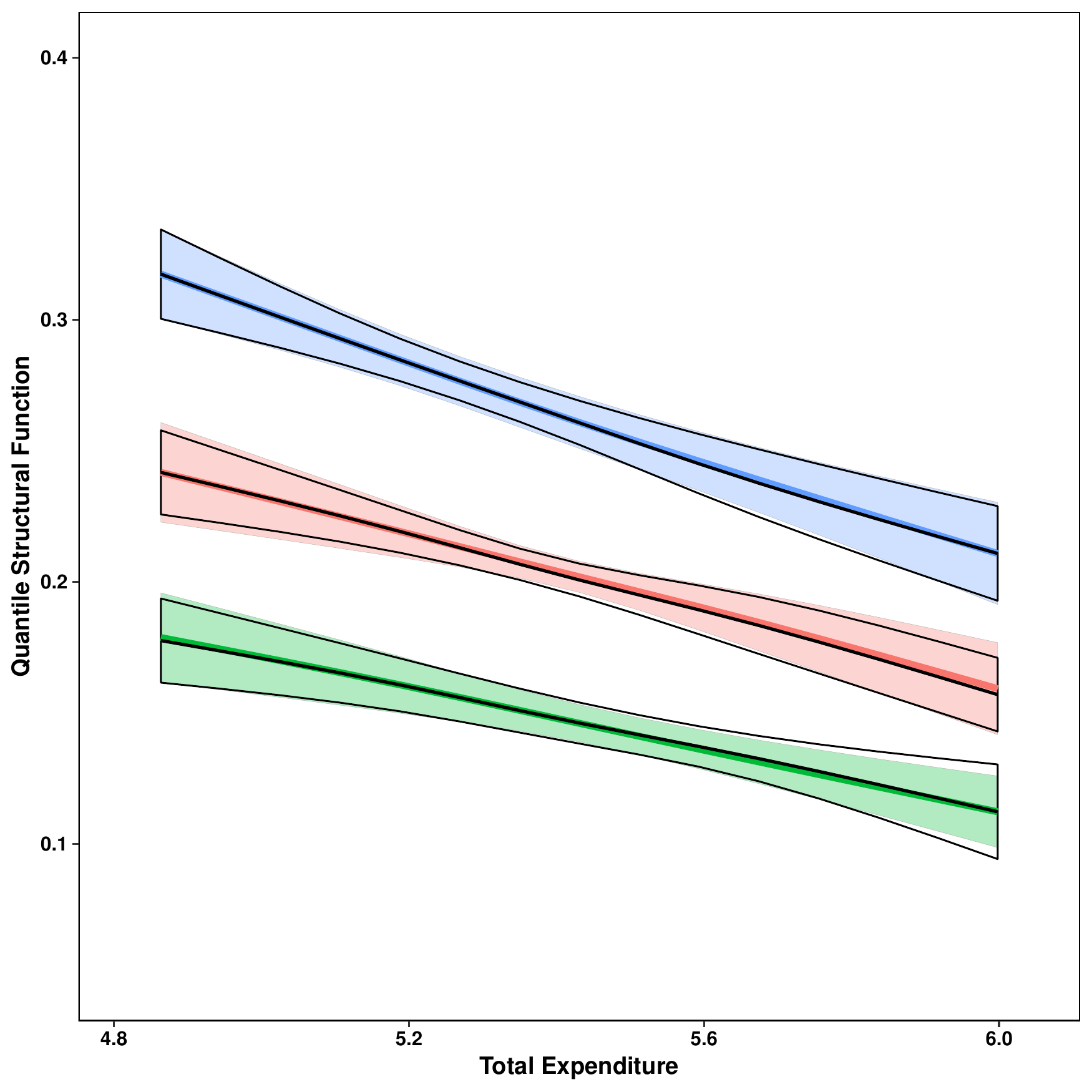}

}

\subfloat[$M=5$.]{\includegraphics[width=7.5cm,height=6cm]{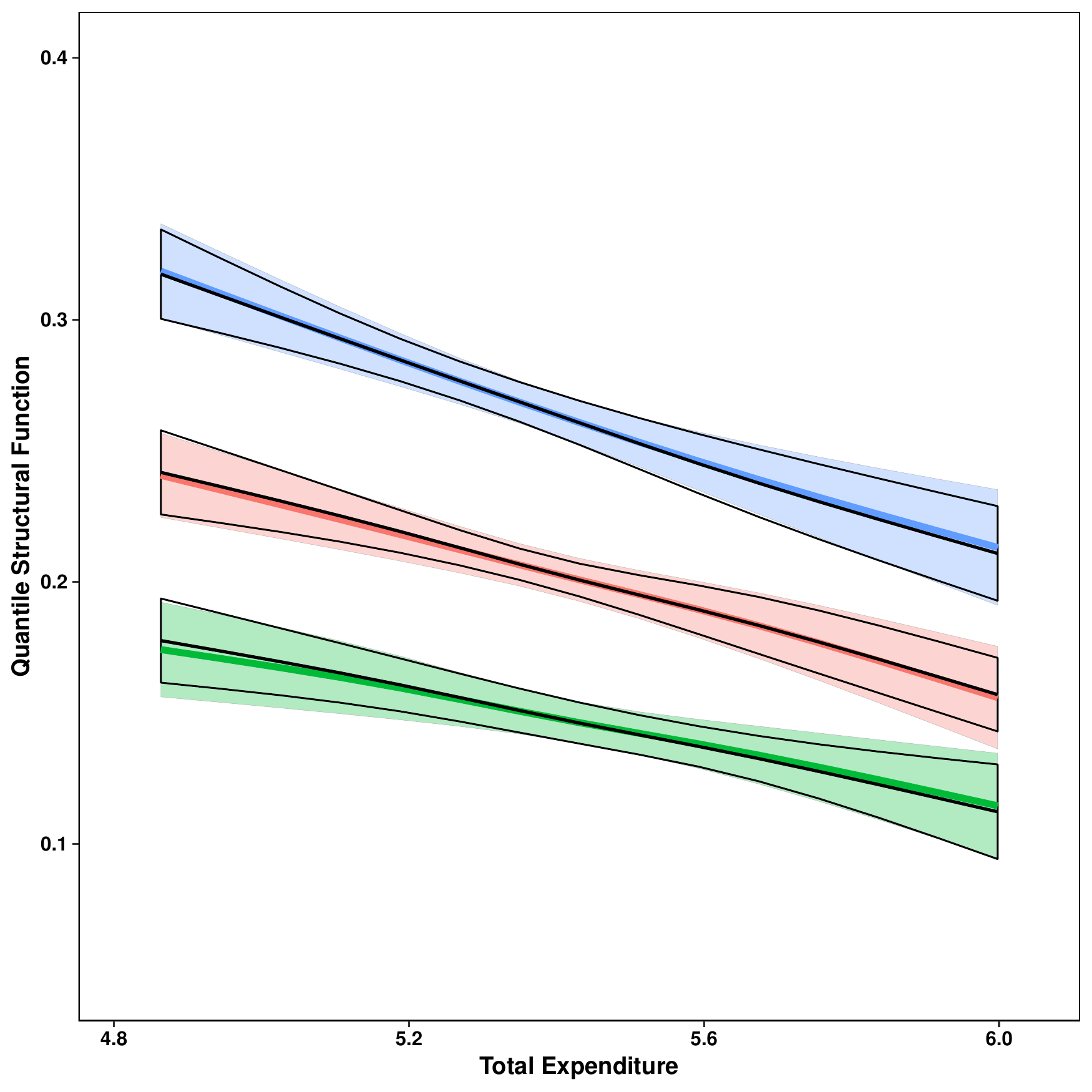} 

}\subfloat[$M=15$.]{\includegraphics[width=7.5cm,height=6cm]{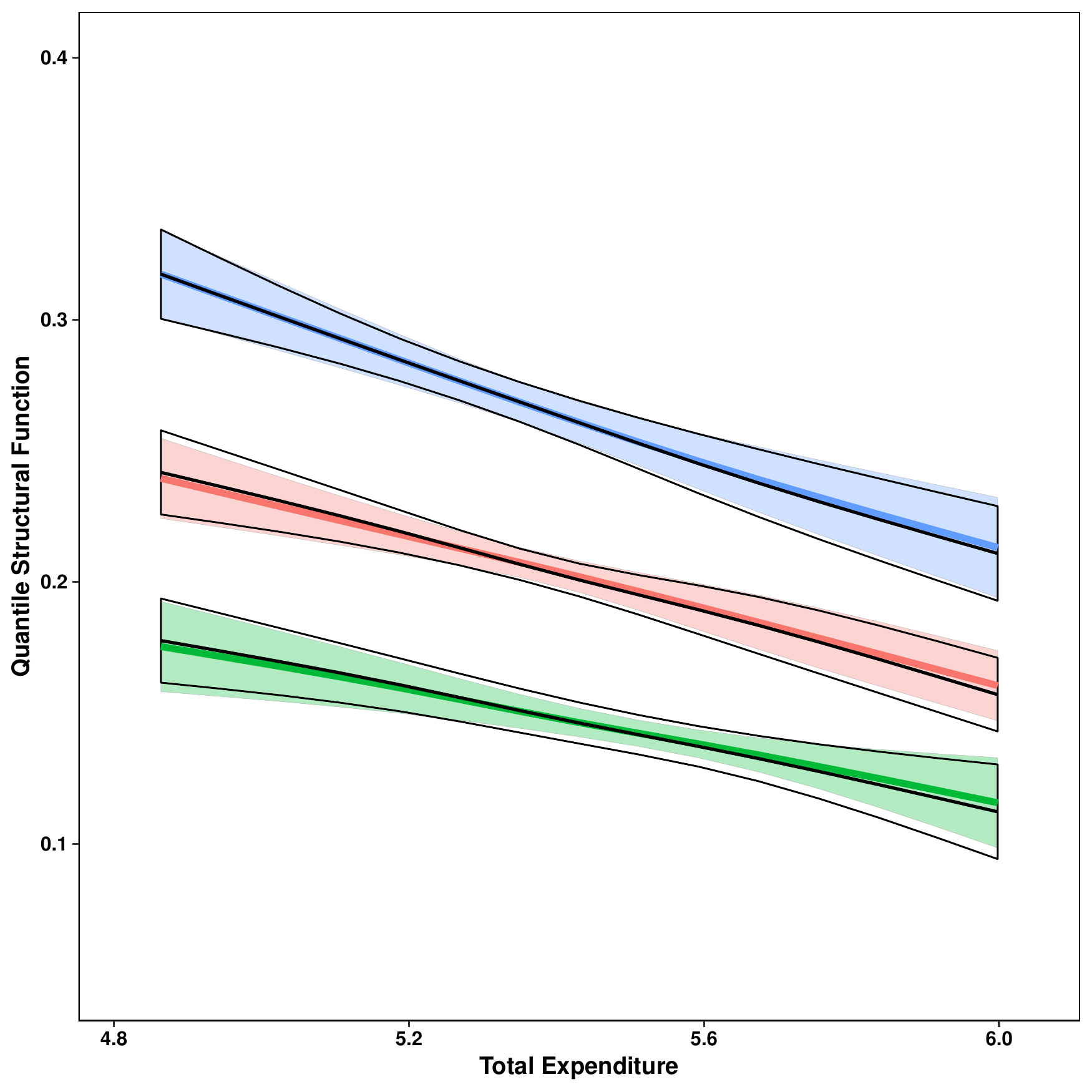} 

}

\caption{\textsc{Design} 2. QSF for food with discrete instrument $\widetilde{Z}$
(\foreignlanguage{british}{coloured}) and with continuous instrument
$Z^{*}$ (black).}
\label{fig:QSF_Des2_food} 
\end{figure}

\begin{figure}[t]
\subfloat[$M=2$.]{\includegraphics[width=7.5cm,height=6cm]{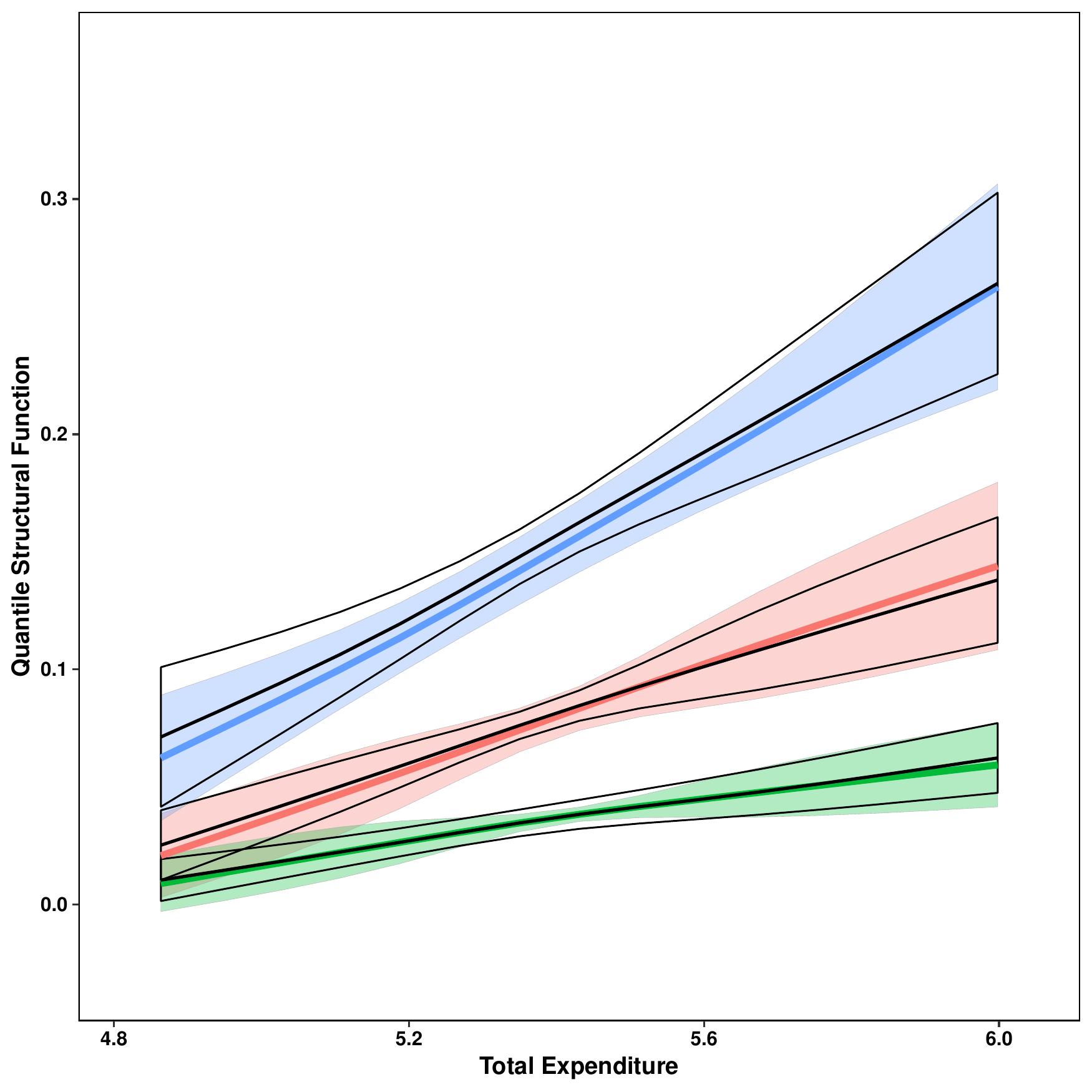} 

}\subfloat[$M=3$.]{\includegraphics[width=7.5cm,height=6cm]{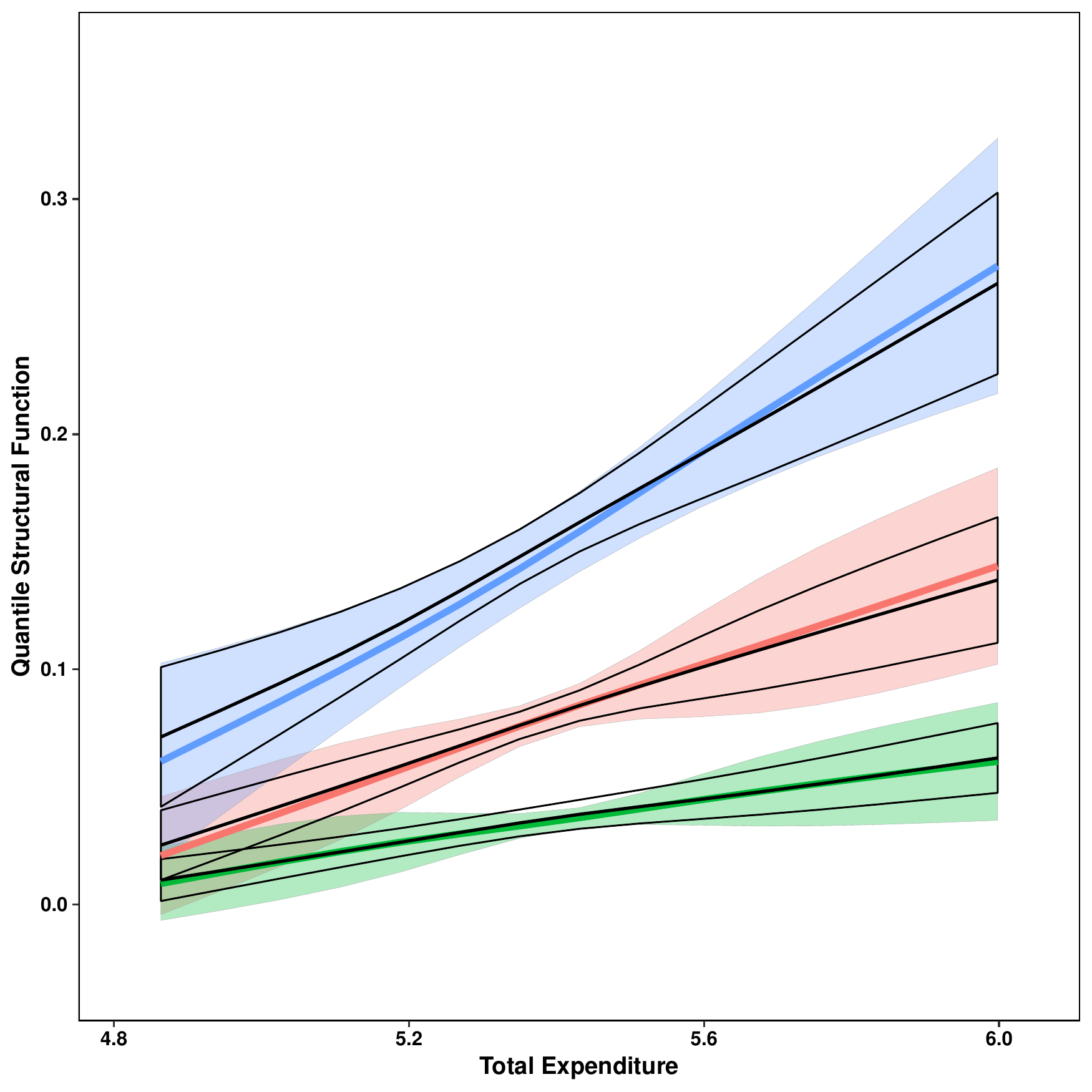} 

}

\subfloat[$M=5$.]{\includegraphics[width=7.5cm,height=6cm]{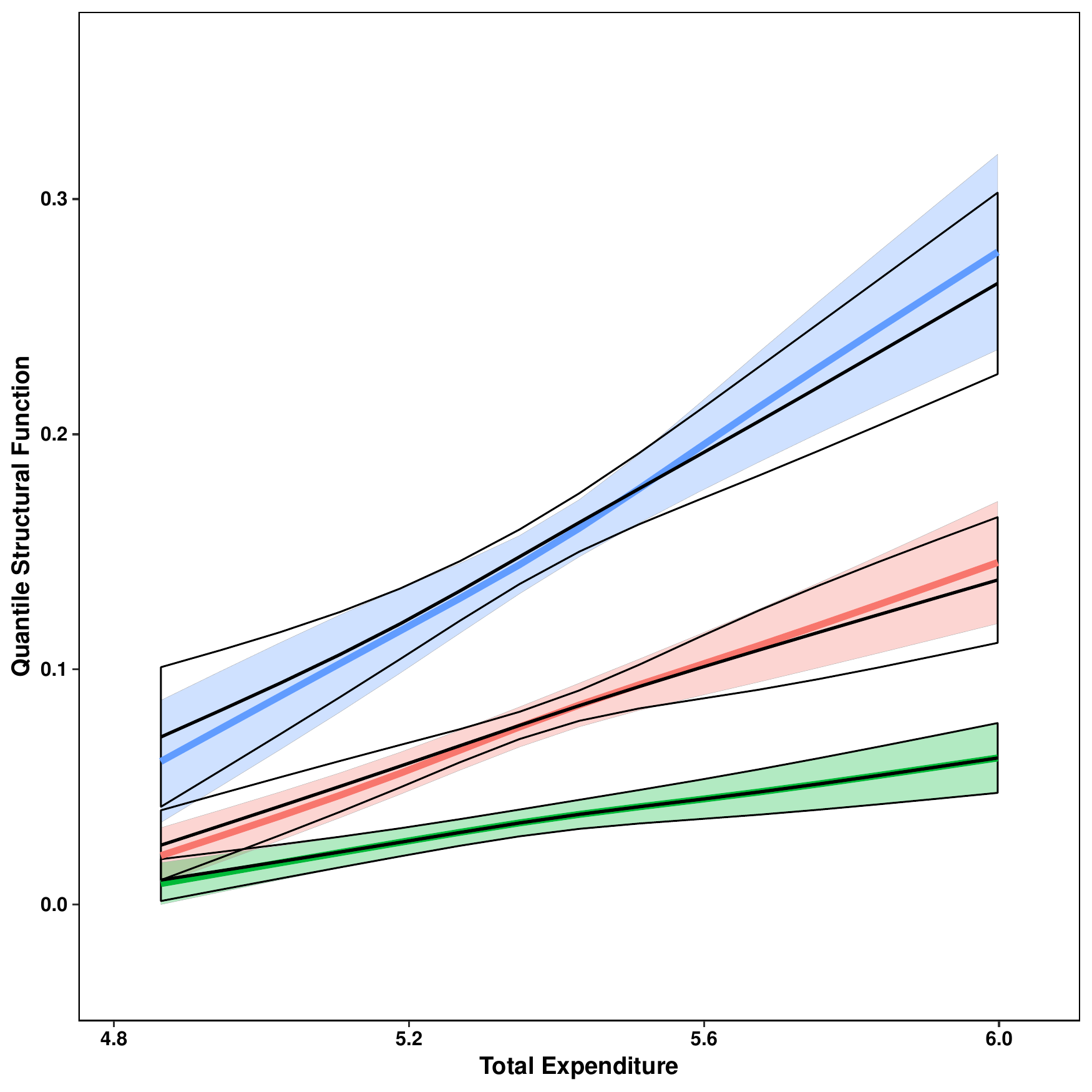}

}\subfloat[$M=15$.]{\includegraphics[width=7.5cm,height=6cm]{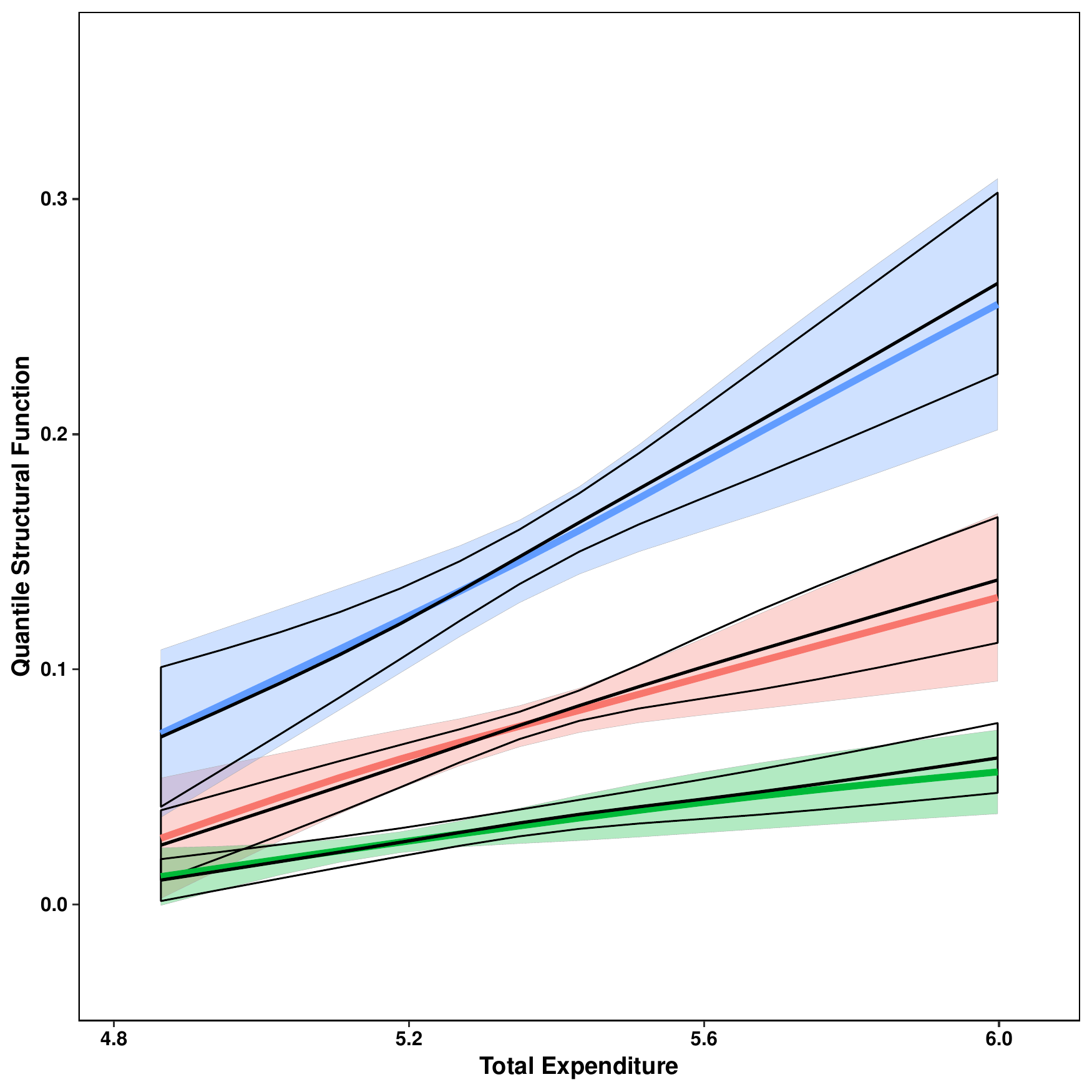}

}

\caption{\textsc{Design} 1. QSF for leisure with discrete instrument $\widetilde{Z}$
(\foreignlanguage{british}{coloured}) and with continuous instrument
$Z^{*}$ (black).}
\label{fig:QSF_Des1_leisure} 
\end{figure}

\begin{figure}[t!]
\subfloat[$M=2$.]{\includegraphics[width=7.5cm,height=6cm]{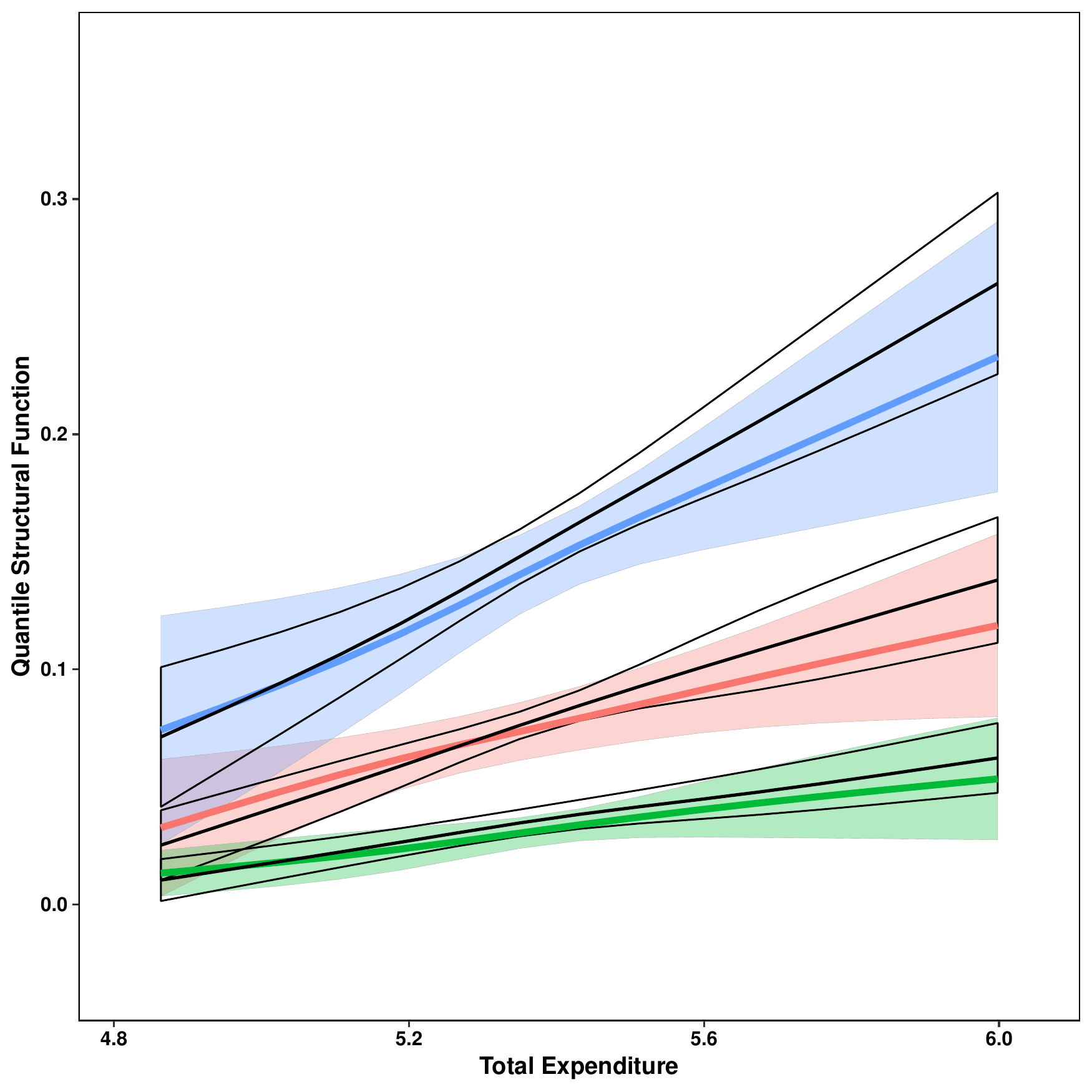}

}\subfloat[$M=3$.]{\includegraphics[width=7.5cm,height=6cm]{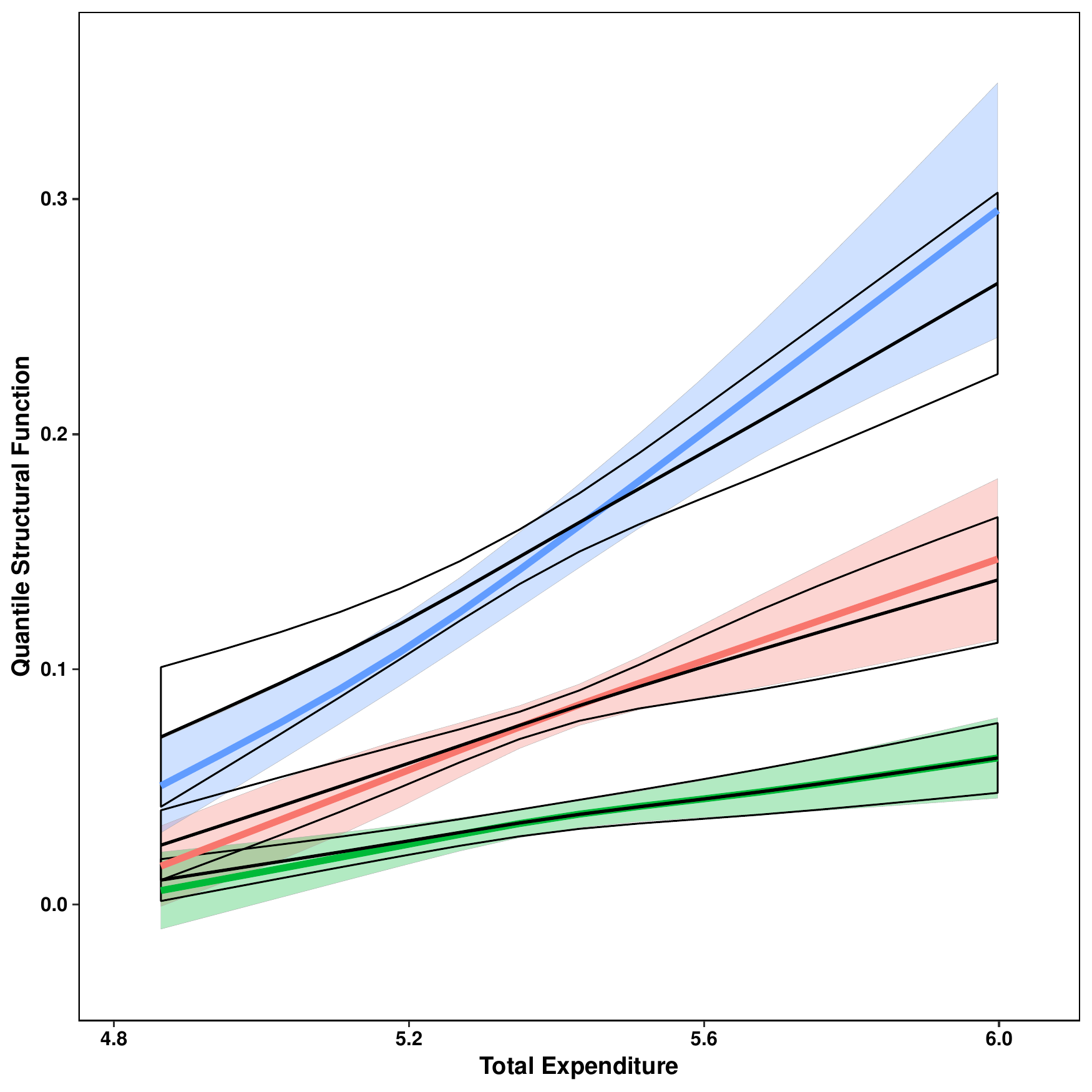}

}

\subfloat[$M=5$.]{\includegraphics[width=7.5cm,height=6cm]{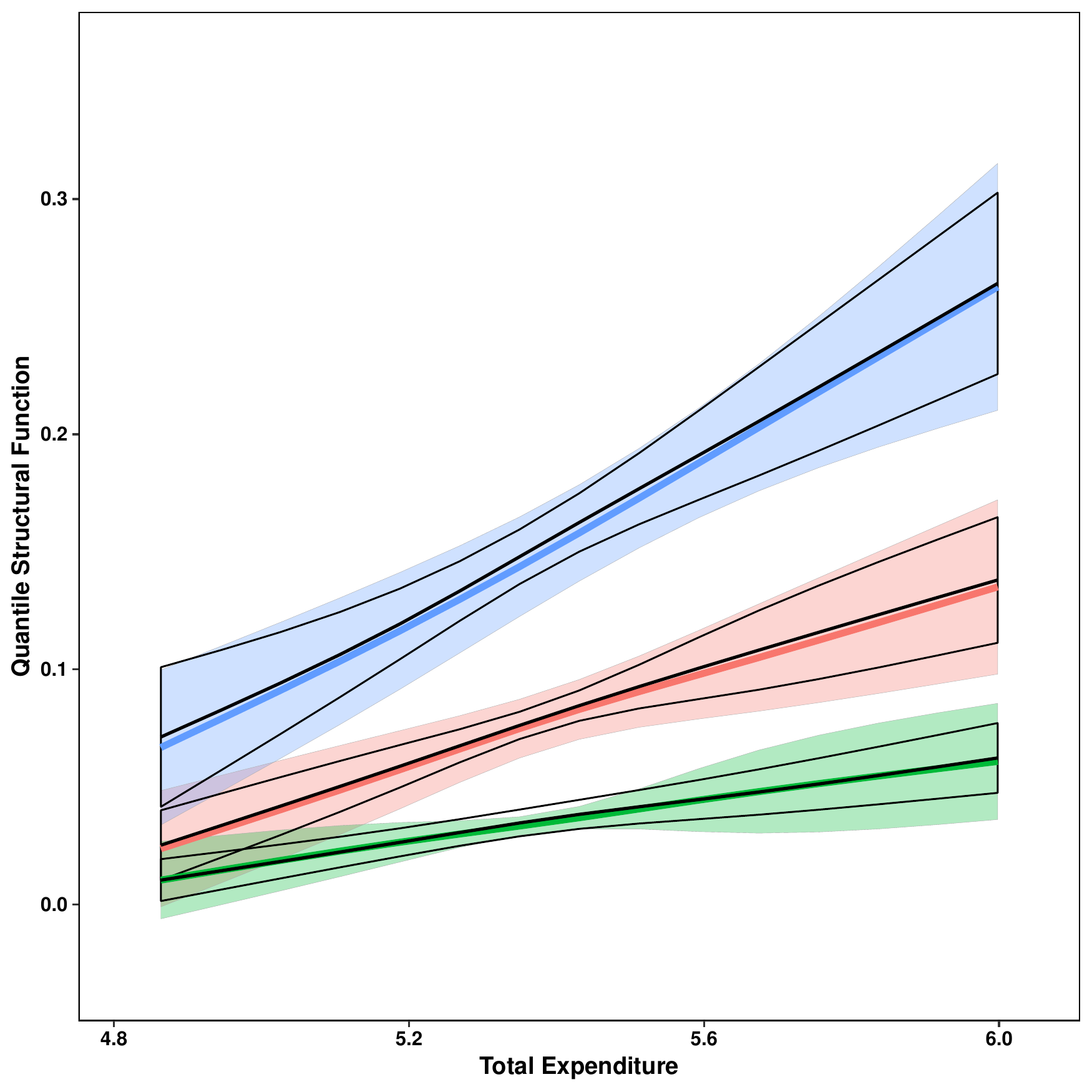}

}\subfloat[$M=15$.]{\includegraphics[width=7.5cm,height=6cm]{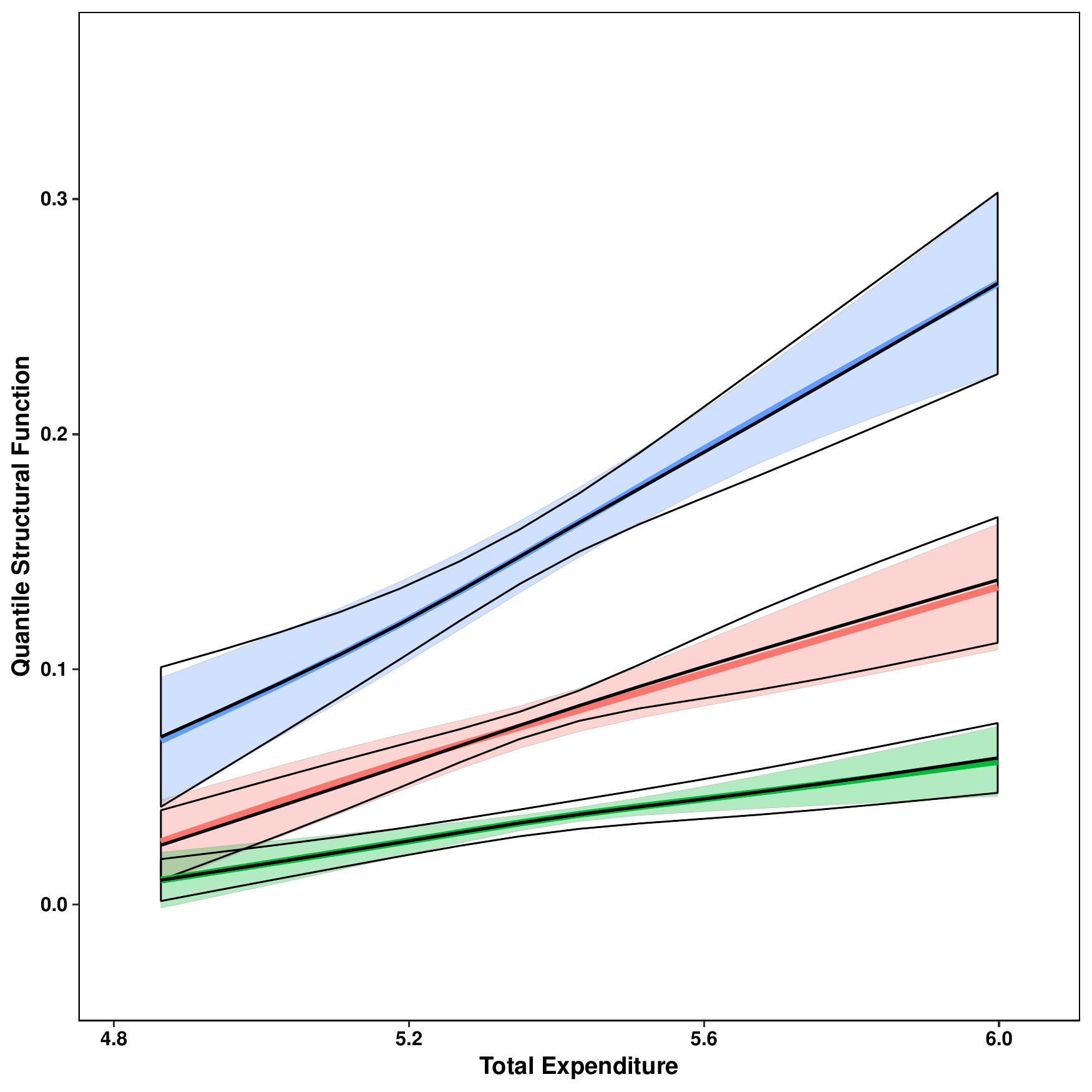}

}

\caption{\textsc{Design} 2. QSF for leisure with discrete instrument $\widetilde{Z}$
(\foreignlanguage{british}{coloured}) and with continuous instrument
$Z^{*}$ (black).}
\label{fig:QSF_Des2_leisure} 
\end{figure}

Figures \ref{fig:QSF_Des1_food} and \ref{fig:QSF_Des2_food} show
the $0.25$, $0.5$ and $0.75$-QSFs for food estimated with each
set of four instruments, respectively, as well as the corresponding
benchmark QSFs estimated using the original continuous instrument
$Z^{*}$. Figures \ref{fig:QSF_Des1_leisure} and \ref{fig:QSF_Des2_leisure}
show the corresponding QSFs for leisure. For comparison purposes the
implementation is exactly as in Chernozhukov et al. (2017). We report
weighted bootstrap 90\%-confidence bands that are uniform over the
support regions of the displayed QSFs,\footnote{All QSFs and uniform confidence bands are obtained over the region
$[\widehat{Q}_{X}(0.1),\widehat{Q}_{X}(0.9)]\times\{0.25,0.5,0.75\}$,
where the interval $[\widehat{Q}_{X}(0.1),\widehat{Q}_{X}(0.9)]$
is approximated by a grid of 5 points $\{\widehat{Q}_{X}(0.1),\widehat{Q}_{X}(0.3),\ldots,\widehat{Q}_{X}(0.9)\}$.
For graphical representation the QSFs are then interpolated by splines
over that interval.} constructed with $250$ bootstrap replications. Our empirical results
show that both discretisation schemes deliver very similar QSF estimates
and confidence bands that capture the main features of the benchmark
QSFs estimated with a continuous instrument. The largest deviations
from the benchmark QSFs occur for $M=2$ and the non uniform \textsc{Design}
2, where the first value of $\widetilde{Z}$ is allocated to $6\%$
of the observations only.

For this dataset the main features of Engel curves for food and leisure
are well captured when estimation is performed with a discrete valued
instrumental variable.\footnote{We have implemented additional robustness checks by estimating the
average and distribution structural functions, as well as nonlinear
specifications of the QSF, when the vector $p(X)$ is augmented with
spline transformations of $X$. Our empirical findings for these objects
are qualitatively similar.} Overall our empirical findings support our identification results
and illustrate the use of discrete instruments for the estimation
of structural functions in 
triangular systems.

\clearpage{}

\appendix

\section{Proof of Main Results\label{sec:Proofs}}

\subsection{Proof of Theorem \protect\ref{thm:Theorem1}}
\begin{proof}
\begin{sloppy}Part (i). The proof builds on the proof of Lemma S3
in Spady and Stouli (2018). The matrix $E\left[w(X,V)w(X,V)'\right]$
is of the form 
\begin{align*}
E\left[w(X,V)w(X,V)'\right] & =E\left[\{p(X)\otimes q(V)\}\{p(X)\otimes q(V)\}'\right]\\
 & =E\left[\{p(X)p(X)'\}\otimes\{q(V)q(V)'\}\right]\\
 & =E\left[\begin{array}{cc}
p(X)p(X)' & p(X)p(X)'V\\
p(X)p(X)'V & p(X)p(X)'V^{2}
\end{array}\right].
\end{align*}
Assumption \ref{ass:P(X)posddef}(a) implies that $E[p(X)p(X)']$
is positive definite. Thus $E\left[w(X,V)w(X,V)'\right]$ is positive
definite if and only if the Schur complement of $E[p(X)p(X)']$ in
$E\left[w(X,V)w(X,V)'\right]$ is positive definite (Boyd and Vandenberghe,
2004, Appendix A.5), i.e., if and only if 
\[
\varUpsilon:=E\left[p(X)p(X)'V^{2}\right]-E\left[p(X)p(X)'V\right]E\left[p(X)p(X)'\right]^{-1}E\left[p(X)p(X)'V\right]
\]
satisfies $\textrm{det}(\varUpsilon)>0$.

With 
\[
\Xi=E\left[p(X)p(X)'V\right]E\left[p(X)p(X)'\right]^{-1},
\]
we have that 
\[
\varUpsilon=E\left[\left\{ p(X)V-\Xi p(X)\right\} \left\{ p(X)V-\Xi p(X)\right\} '\right],
\]
a finite positive definite matrix, if and only if for all $\lambda\neq0$
there is no $d$ such that $\Pr[\{\lambda'p(X)\}V=d'\{\Xi p(X)\}]>0$;
this is an application of the Cauchy-Schwarz inequality for matrices
stated in Tripathi (1999).\end{sloppy}

For $\widetilde{\mathcal{X}}\subseteq\mathcal{X}_{V}^{o}$, positive
definiteness of $E[1(X\in\widetilde{\mathcal{X}})p(X)p(X)']$ under
Assumption \ref{ass:P(X)posddef}(a) implies that for all $\lambda\neq0$,
$E[1(X\in\widetilde{\mathcal{X}})\{\lambda'p(X)\}^{2}]>0$, which
implies that for all $\lambda\neq0$, the set $\{x\in\widetilde{\mathcal{X}}\,:\,\lambda'p(x)\neq0\}$
has positive probability. By definition of $\mathcal{V}_{x}$ and
the variance, we have that $\textrm{Var}(V\mid X=x)>0$ for each $x\in\mathcal{X}_{V}^{o}$.
Thus for all $\lambda\neq0$, by $\Xi$ being a constant matrix, there
is no $d$ such that $\Pr[\{\lambda'p(X)\}V=d'\{\Xi p(X)\}]>0$, and
$E\left[w(X,V)w(X,V)'\right]$ is positive definite.

Part (ii). The proof is similar to Part (i). 
\end{proof}

\subsection{Proof of Theorem \protect\ref{thm:Theorem3}}
\begin{proof}
By iterated expectations, $E\left[w(X,V)w(X,V)'\right]$ can be expressed
as 
\[
E\left[w(X,V)w(X,V)'\right]=E\left[\{p(X)p(X)'\}\otimes E[q(V)q(V)'\mid X]\right].
\]
We show that $E\left[w(X,V)w(X,V)'\right]$ is positive definite.
By Assumption \ref{ass:P(X)posddef}(a), there is a positive constant
$B$ such that 
\begin{align*}
E\left[\{p(X)p(X)'\}\otimes E[q(V)q(V)'\mid X]\right] & \geq E\left[1(X\in\widetilde{\mathcal{X}})\{p(X)p(X)'\}\otimes\lambda_{\min}(X)I_{K}\right]\\
 & \geq E\left[1(X\in\widetilde{\mathcal{X}})\{p(X)p(X)'\}\right]\otimes BI_{K},
\end{align*}
where $I_{K}$ is the $K\times K$ identity matrix, and the inequality
means no less than in the usual partial ordering for positive semi-definite
matrices. The conclusion then follows by the matrix following the
last inequality being positive definite by Assumption \ref{ass:P(X)posddef}(a).

Under Assumption \ref{ass:P(X)posddef}(b) the proof is similar upon
using that $E\left[w(X,V)w(X,V)'\right]=E\left[E[p(X)p(X)'\mid V]\otimes\{q(V)q(V)'\}\right]$. 
\end{proof}

\subsection{Proof of Theorem \protect\ref{thm:Theorem4}}
\begin{proof}
By Theorem \ref{thm:Theorem3} the matrix $E\left[w(X,V)w(X,V)'\right]$
exists and is positive definite. The result then follows by Theorem
1 in Chernozhukov et al. (2017). 
\end{proof}

\subsection{Proof of Theorem \protect\ref{thm:Theorem5}}
\begin{proof}
The result follows from the proof of Theorem 1 in Newey and Stouli
(2018). 
\end{proof}

\subsection{Proof of Theorem \protect\ref{thm:Theorem6}}
\begin{proof}
Under Assumption \ref{ass:NPDRModel}(a), $q_{\tau}\left(V\right)$
is identified for each $\tau\in\mathcal{T}$ by Theorem \ref{thm:Theorem5}.
This implies that, for $\mathcal{T}=\mathcal{Y}$, the conditional
CDF $F_{Y\mid XV}(y\mid X,V)=\Gamma(p(X)'q_{y}(V))$ is unique with
probability one for each $y\in\mathcal{Y}$, since $p\left(X\right)$
and $\Gamma$ are known functions. The structural functions are then
identified by (\ref{eq:structure}) in the main text. For $\mathcal{T}=\mathcal{U}$,
when $Y$ is continuous the conditional quantile function $Q_{Y\mid XV}(u\mid X,V)=p(X)'q_{u}(V)$
is unique with probability one for each $u\in\mathcal{U}$. Since
$y\mapsto F_{Y\mid XV}(y\mid XV)$ is the inverse function of $u\mapsto Q_{Y\mid XV}(u\mid X,V)$,
the structural functions are also identified by (\ref{eq:structure})
in the main text. 
\end{proof}

\section{Formal Statement of Remark \protect\ref{rem:Remark2}\label{sec:ProofRemark2}}
\begin{prop}
(i) Let $q(V)=(1,V)^{\prime}$. If Assumption \ref{ass:P(X)posddef}(a)
holds with $\widetilde{\mathcal{X}}\subseteq\mathcal{X}_{V}^{o}$
then it also holds with $\widetilde{\mathcal{X}}\subseteq\mathcal{X}_{V}^{*}$.
(ii) Let $p(X)=(1,X)^{\prime}$. If Assumption \ref{ass:P(X)posddef}(b)
holds with $\widetilde{\mathcal{V}}\subseteq\mathcal{V}_{X}^{o}$
then it also holds with $\widetilde{\mathcal{V}}\subseteq\mathcal{V}_{X}^{*}$.\label{prop:Proposition2} 
\end{prop}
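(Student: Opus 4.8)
The plan is to establish part (i) in full and to obtain part (ii) by the mirror‑image argument, interchanging the roles of $X$ and $V$, of $p$ and $q$, and of the two halves of Assumption \ref{ass:P(X)posddef}. For part (i), write $A(x):=E[q(V)q(V)'\mid X=x]$; with $q(V)=(1,V)'$ this is a $2\times 2$ positive semidefinite matrix with $\det A(x)=\text{Var}(V\mid X=x)$ and $\text{tr}\,A(x)=1+E[V^{2}\mid X=x]=E[\|q(V)\|^{2}\mid X=x]\ge 1$. The first step is to turn the defining condition of $\mathcal{X}_{V}^{*}$ into a conditional‑variance condition: the two eigenvalues of $A(x)$ are nonnegative, sum to $\text{tr}\,A(x)$, and multiply to $\det A(x)$, so the bound $\sup_{x}E[\|q(V)\|^{2}\mid X=x]\le C$ in Assumption \ref{ass:P(X)posddef}(a) gives $0<\lambda_{\max}(x)\le C$ and hence
\[
\lambda_{\min}(x)\;=\;\frac{\text{Var}(V\mid X=x)}{\lambda_{\max}(x)}\;\ge\;\frac{\text{Var}(V\mid X=x)}{C}\qquad\text{for all }x\in\mathcal{X}.
\]
Thus any set of $x$'s on which $\text{Var}(V\mid X=x)$ is bounded below by some $b>0$ is contained in $\mathcal{X}_{V}^{*}$ with the constant $B=b/C$.

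The second step is to extract such a set from the $\widetilde{\mathcal{X}}\subseteq\mathcal{X}_{V}^{o}$ along which Assumption \ref{ass:P(X)posddef}(a) is assumed to hold. For $b>0$ put $\widetilde{\mathcal{X}}_{b}:=\{x\in\widetilde{\mathcal{X}}:\text{Var}(V\mid X=x)\ge b\}$, which by the display lies in $\mathcal{X}_{V}^{*}$ with $B=b/C$; the other two requirements of Assumption \ref{ass:P(X)posddef}(a) (existence of $E[p(X)p(X)']$ and the bound involving $C$) do not involve $\widetilde{\mathcal{X}}$ and are unaffected. It remains to choose $b>0$ so that $E[1(X\in\widetilde{\mathcal{X}}_{b})p(X)p(X)']$ stays positive definite. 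Since $\widetilde{\mathcal{X}}\subseteq\mathcal{X}_{V}^{o}$ forces $\text{Var}(V\mid X=x)>0$ for every $x\in\widetilde{\mathcal{X}}$, the indicators $1(X\in\widetilde{\mathcal{X}}_{b})$ increase to $1(X\in\widetilde{\mathcal{X}})$ as $b\downarrow 0$; as $1(X\in\widetilde{\mathcal{X}})\|p(X)\|^{2}$ is an integrable envelope (because $E[p(X)p(X)']$ exists), dominated convergence gives $E[1(X\in\widetilde{\mathcal{X}}_{b})p(X)p(X)']\to E[1(X\in\widetilde{\mathcal{X}})p(X)p(X)']$ entrywise, hence in norm, so its smallest eigenvalue converges to that of the positive definite limit. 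Fixing a $b>0$ for which the left‑hand side is positive definite, and taking this $\widetilde{\mathcal{X}}_{b}$ as the specified set together with $B=b/C$, completes part (i); part (ii) is identical after relabelling, using $p(X)=(1,X)'$, Assumption \ref{ass:P(X)posddef}(b), $\mathcal{V}_{X}^{o}$ and $\mathcal{V}_{X}^{*}$.

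The step I expect to carry the (modest) content of the argument is this last shrinking: the hypothesis supplies no uniform lower bound $\inf_{x\in\widetilde{\mathcal{X}}}\text{Var}(V\mid X=x)>0$, so one genuinely has to pass to a proper subset of $\widetilde{\mathcal{X}}$, and what needs verification is that the shrinkage does not destroy positive definiteness of the relevant second‑moment matrix --- precisely what the dominated‑convergence and continuity‑of‑eigenvalues step secures. The algebraic ingredients, namely the eigenvalue/trace/determinant relations for $2\times 2$ matrices and the entrywise domination, are routine.
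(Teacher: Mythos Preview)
Your argument is correct and in fact more careful than the paper's own proof. The paper argues that $|\mathcal{V}_x|\ge 2$ already yields a \emph{uniform} lower bound $\mathrm{Var}(V\mid X=x)\ge B>0$ over all of $\mathcal{X}_V^o$, then applies the same trace--determinant inequality you derive (packaged there as Lemma~\ref{lem:Lemma1}) to conclude $\mathcal{X}_V^o\subseteq\mathcal{X}_V^*$ and keep the original $\widetilde{\mathcal{X}}$ unchanged. You rightly observe that $|\mathcal{V}_x|\ge 2$ only guarantees $\mathrm{Var}(V\mid X=x)>0$ pointwise, with no uniform floor in general, and you address this by the shrinking step: pass to $\widetilde{\mathcal{X}}_b=\{x\in\widetilde{\mathcal{X}}:\mathrm{Var}(V\mid X=x)\ge b\}\subseteq\mathcal{X}_V^*$ and use dominated convergence together with continuity of eigenvalues to choose $b>0$ small enough that $E[1(X\in\widetilde{\mathcal{X}}_b)\,p(X)p(X)']$ remains positive definite. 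The algebraic core (the $2\times 2$ eigenvalue bound via $\lambda_{\min}(x)=\mathrm{Var}(V\mid X=x)/\lambda_{\max}(x)$ and $\lambda_{\max}(x)\le C$) is the same in both proofs; what differs is that the paper takes the uniformity for granted, whereas your route supplies the missing limit argument at the cost of replacing $\widetilde{\mathcal{X}}$ by a subset.
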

\begin{proof}
Each $x\in\mathcal{X}_{V}^{o}$ satisfies $|\mathcal{V}_{x}|\geq2$,
which by the definitions of $\mathcal{V}_{x}$ and the variance implies
that $\textrm{Var}(V\mid X=x)\geq B>0$. For $q(V)=(1,V)'$, the smallest
eigenvalue of $E[q(V)q(V)'\mid X=x]$ is then bounded away from zero
for each $x\in\mathcal{X}_{V}^{o}$, by Lemma \ref{lem:Lemma1} below.
Therefore each $x\in\mathcal{X}_{V}^{o}$ also satisfies $x\in\mathcal{X}_{V}^{*}$,
so that $\mathcal{X}_{V}^{o}\subseteq\mathcal{X}_{V}^{*}$. The result
for Part (i) follows, and the proof for Part (ii) is similar. 
\end{proof}
\begin{lem}
For a set of random variables $\{X(t)\}_{t\in\overline{\mathcal{T}}}$
such that $E[X(t)^{2}]\leq C$ and $\text{Var}(X(t))\geq B>0$, the
smallest eigenvalue of 
\[
\Sigma(t)=E\left[\left(\begin{array}{c}
1\\
X(t)
\end{array}\right)\left(\begin{array}{cc}
1 & X(t)\end{array}\right)\right]
\]
is bounded away from zero.\label{lem:Lemma1} 
\end{lem}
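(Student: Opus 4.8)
The plan is to give an explicit lower bound on the smallest eigenvalue of the $2\times 2$ matrix $\Sigma(t)$ in terms of the bounds $B$ and $C$ that hold uniformly over $t\in\overline{\mathcal{T}}$. Writing $m(t)=E[X(t)]$ and $s^2(t)=\mathrm{Var}(X(t))=E[X(t)^2]-m(t)^2$, the matrix is
\[
\Sigma(t)=\begin{pmatrix} 1 & m(t)\\ m(t) & m(t)^2+s^2(t)\end{pmatrix}.
\]
First I would compute the trace $T(t)=1+m(t)^2+s^2(t)$ and the determinant $D(t)=s^2(t)$; the latter is the key simplification, since the determinant of this particular matrix is exactly the variance. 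The two eigenvalues $\lambda_{\pm}(t)$ are the roots of $\lambda^2-T(t)\lambda+D(t)=0$, so the smallest one is
\[
\lambda_{\min}(t)=\frac{T(t)-\sqrt{T(t)^2-4D(t)}}{2}.
\]

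Next I would bound this below using the elementary inequality $\sqrt{a^2-b}\le a-\tfrac{b}{2a}$ for $0\le b\le a^2$ (equivalently $\lambda_{\min}\lambda_{\max}=D$ and $\lambda_{\max}\le T$, so $\lambda_{\min}=D/\lambda_{\max}\ge D/T$). This gives the clean bound
\[
\lambda_{\min}(t)\ge\frac{D(t)}{T(t)}=\frac{s^2(t)}{1+m(t)^2+s^2(t)}.
\]
Then I would feed in the uniform bounds: $s^2(t)\ge B>0$ in the numerator, and in the denominator $m(t)^2+s^2(t)=E[X(t)^2]\le C$, so $T(t)\le 1+C$. Hence $\lambda_{\min}(t)\ge B/(1+C)>0$ uniformly in $t$, which is the claim. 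A minor point to check is that $C\ge B$ necessarily (since $E[X(t)^2]\ge\mathrm{Var}(X(t))$), so the bound is a genuine positive constant; I would also note $\lambda_{\min}(t)>0$ for each fixed $t$ is already immediate from $D(t)=s^2(t)>0$ together with $T(t)>0$, and the content of the lemma is the \emph{uniform} lower bound.

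The only mild obstacle is bookkeeping: making sure the inequality $\lambda_{\max}(t)\le T(t)$ is justified (immediate, since both eigenvalues are positive and sum to $T(t)$) and that the passage from $D/\lambda_{\max}$ to $D/T$ preserves the direction of the inequality. There is no analytic difficulty — everything reduces to the observation that for a matrix of the form $\begin{pmatrix}1 & m\\ m & m^2+s^2\end{pmatrix}$ the determinant equals the variance $s^2$ and the trace is controlled by the second moment $E[X(t)^2]$. I would present the argument in three short displayed lines (the matrix, the bound $\lambda_{\min}\ge D/T$, and the substitution of $B$ and $C$).
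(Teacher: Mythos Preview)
Your proposal is correct and follows essentially the same route as the paper: both arguments use $\det\Sigma(t)=\mathrm{Var}(X(t))=\lambda_{\max}(t)\lambda_{\min}(t)$ and then bound $\lambda_{\max}(t)$ from above via the second-moment bound to conclude $\lambda_{\min}(t)\ge B/\widetilde C$. The only cosmetic difference is that you bound $\lambda_{\max}(t)$ by the trace $T(t)\le 1+C$, whereas the paper bounds it by a matrix norm $\|\Sigma(t)\|\le\widetilde C$; both are immediate and yield the same conclusion.
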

\begin{proof}
We have $\textrm{det}(\Sigma(t))=\textrm{Var}(X(t))=\lambda_{\max}(t)\lambda_{\min}(t)$,
where $\lambda_{\max}(t)$ and $\lambda_{\min}(t)$ are the largest
and smallest eigenvalues of $\Sigma(t)$, respectively. Note that,
for some positive constant $\widetilde{C}<\infty$ and all $t\in\overline{\mathcal{T}}$,
\[
\lambda_{\max}(t)=\sup_{\lambda:||\lambda||=1}\lambda'\Sigma(t)\lambda\leq||\lambda||^{2}||\Sigma(t)||\leq||\Sigma(t)||\leq\widetilde{C}
\]
by $E[X(t)^{2}]$ bounded. Therefore 
\[
\lambda_{\min}(t)=\frac{\textrm{Var}(X(t))}{\lambda_{\max}(t)}\geq\frac{\textrm{Var}(X(t))}{\widetilde{C}}\geq\frac{B}{\widetilde{C}},
\]
and the result follows. 
\end{proof}

\section{Estimation of Structural Functions\label{sec:Quantile-Regression-Estimation}}

Here we give a summary of the key steps in the implementation of the
quantile regression-based estimators for structural functions proposed
by Chernozhukov et al. (2017). A detailed description and implementation
algorithms for estimation and the weighted bootstrap procedures are
given in Chernozhukov et al. (2017).

The estimators implemented in the empirical application have three
main stages. In the first stage, we estimate the control variable,
$\{\widehat{V}_{i}\}_{i=1}^{n}$. In the second stage, we estimate
the distribution CRF, $\widehat{F}_{Y\mid XZ_{1}V}(y\mid x,z_{1},v)$.
In the third and final stage, estimators $\widehat{G}(y,x)$, $\widehat{Q}(\tau,x)$
and $\widehat{\mu}(x)$ of the distribution, quantile and average
structural functions, respectively, are obtained.

\textbf{First stage.} {[}Control function estimation{]} Denoting the
usual check function by $\rho_{v}(z)=(v-1(z<0))z$, the quantile regression
estimator of $F_{X\mid Z}$ is, for $(x,z)\in\mathcal{X}\mathcal{Z}$,
\begin{eqnarray}
\widehat{F}_{X\mid Z}(x\mid z) & = & \epsilon+\int_{\epsilon}^{1-\epsilon}1\left\{ \widehat{\pi}(v)^{\prime}[s(\widetilde{z})\otimes r(z_{1})]\leq x\right\} dv,\label{eq:FhatXZ}\\
\widehat{\pi}(v) & \in & \arg\min_{\pi\in\mathbb{R}^{\text{dim}(Z)}}\sum_{i=1}^{n}\rho_{v}(X_{i}-\pi^{\prime}[s(\widetilde{Z}_{i})\otimes r(Z_{1i})]),\label{eq:pihat}
\end{eqnarray}
for some small constant $\epsilon>0$. The adjustment in the limits
of the integral in (\ref{eq:FhatXZ}) avoids tail estimation of quantiles\footnote{Chernozhukov et al. (2013) provide conditions under which this adjustment
does not introduce bias. }. In practice, for $\epsilon$ in $(0,0.5)$ (e.g., $\epsilon=0.01$)
and a fine mesh of $T$ values $\{\epsilon=v_{1}<\cdots<v_{T}=1-\epsilon\}$,
estimate $\{\widehat{\pi}(v_{t})\}_{t=1}^{T}$ by solving (\ref{eq:pihat}).
Obtain the control function estimator $\widehat{F}_{X\mid Z}(X_{i}\mid Z_{i})$
as in (\ref{eq:FhatXZ}), and set $\widehat{V}_{i}=\widehat{F}_{X\mid Z}(X_{i}\mid Z_{i})$,
for $i\in\{1,\ldots,n\}$.\medskip{}

\textbf{Second stage.} {[}Distribution CRF estimation{]} The quantile
regression estimator of $F_{Y\mid XZ_{1}V}$ is, for $(y,x,z_{1},v)\in\mathcal{Y}\mathcal{X}\mathcal{Z}_{1}\mathcal{V}$,
\begin{eqnarray}
\widehat{F}_{Y\mid XZ_{1}V}(y\mid x,z_{1},v) & = & \epsilon+\int_{\epsilon}^{1-\epsilon}1\left\{ \widehat{\beta}(u)^{\prime}w(x,z_{1},v)\leq y\right\} du,\label{eq:FhatYXV}\\
\widehat{\beta}(u) & \in & \arg\min_{\beta\in\mathbb{R}^{\text{dim}(w(X,Z_{1},V))}}\sum_{i=1}^{n}\rho_{u}(Y_{i}-\beta^{\prime}w(X_{i},Z_{1i},\widehat{V}_{i})),\label{eq:betahat}
\end{eqnarray}
In practice, for $\epsilon$ in $(0,0.5)$ (e.g., $\epsilon=0.01$)
and a fine mesh of $T$ values $\{\epsilon=u_{1},\ldots,u_{T}=1-\epsilon\}$,
estimate $\{\widehat{\beta}(u_{t})\}_{t=1}^{T}$ by solving (\ref{eq:betahat}).
Obtain the distribution CRF estimator $\widehat{F}_{Y\mid XZ_{1}V}(y\mid x,Z_{1i},\widehat{V}_{i})$
as in (\ref{eq:FhatYXV}).\medskip{}

\textbf{Third stage.} {[}Structural functions estimation{]} Let $\mathcal{Y}^{+}=\mathcal{Y}\cap\lbrack0,\infty)$
and $\mathcal{Y}^{-}=\mathcal{Y}\cap(-\infty,0)$. Given estimates
$(\{\widehat{V}_{i}\}_{i=1}^{n},\widehat{F}_{Y\mid XZ_{1}V})$, the
estimator for the distribution structural function takes the form
\[
\widehat{G}(y,x)=\frac{1}{n}\sum_{i=1}^{n}\widehat{F}_{Y\mid XZ_{1}V}(y\mid x,Z_{1i},\widehat{V}_{i}).
\]
Given the distribution structural function estimate, the QSF estimator
is defined as 
\begin{equation}
\widehat{Q}(p,x)=\int_{\mathcal{Y}^{+}}1\{\widehat{G}(y,x)\leq p\}dy-\int_{\mathcal{Y}^{-}}1\{\widehat{G}(y,x)\geq p\}dy,\label{eq:QSFhat}
\end{equation}
and the average structural function estimator as 
\begin{equation}
\widehat{\mu}(x)=\int_{\mathcal{Y}^{+}}[1-\widehat{G}(y,x)]\nu(dy)-\int_{\mathcal{Y}^{-}}\widehat{G}(y,x)\nu(dy),\label{eq:ASFhat}
\end{equation}
where $\nu$ is either the counting measure when $\mathcal{Y}$ is
countable or the Lebesgue measure otherwise. When the set $\mathcal{Y}$
is uncountable and bounded, we approximate the previous integrals
by sums over a fine mesh of equidistant points $\mathcal{Y}_{S}:=\{\inf[y\in\mathcal{Y}]=y_{1}<\cdots<y_{S}=\sup[y\in\mathcal{Y}]\}$
with mesh width $\delta$ such that $\delta\sqrt{n}\to0$. For example,
(\ref{eq:QSFhat}) and (\ref{eq:ASFhat}) are approximated by

\[
\widehat{G}(y,x)=\frac{1}{n}\sum_{i=1}^{n}\widehat{F}_{Y\mid XZ_{1}V}(y\mid x,Z_{1i},\widehat{V}_{i}),
\]
\[
\widehat{Q}_{S}(p,x)=\delta\sum_{s=1}^{S}\left[1(y_{s}\geq0)-1\{\widehat{G}(y_{s},x)\geq p\}\right],\,\widehat{\mu}_{S}(x)=\delta\sum_{s=1}^{S}\left[1(y_{s}\geq0)-\widehat{G}(y_{s},x)\right].
\]

The choices of $\epsilon$ and 
$T$ can differ across stages. In the empirical application we set
$\epsilon=0.01$ and $T=599$ throughout. For the third stage, we
approximate the integrals in (\ref{eq:QSFhat})-(\ref{eq:ASFhat})
using $S=599$ points. Overall, for this 
application 
the estimates are not very sensitive to $T$, and are also robust
to varying values of $\epsilon$ and $S$.

\end{document}